\algnewcommand{\Input}[1]{\Statex \textbf{Input} #1}
\algnewcommand{\Returns}[1]{\Statex \textbf{Returns} #1}
\newtheorem{corollary}{Corollary}
\newtheorem{lemma}{Lemma}
\newtheorem{theorem}{Theorem}
\newcolumntype{H}{>{\setbox0=\hbox\bgroup}c<{\egroup}@{}}
\newcommand{\NAME}{GAS}
\newcommand{\algsize}{\small}
\renewcommand{\paragraph}[1]{\vspace{0.05in}\noindent\textbf{#1}}
\begin{document}

\title{\NAME{}: Generating Fast and Accurate Surrogate Models for Autonomous Vehicle Systems}

\author{Keyur Joshi}
\orcid{0000-0002-5794-6257}
\email{kpjoshi2@illinois.edu}
\author{Chiao Hsieh}
\orcid{0000-0001-8339-9915}
\email{chsieh16@illinois.edu}
\author{Sayan Mitra}
\orcid{0000-0002-6672-8470}
\email{mitras@illinois.edu}
\author{Sasa Misailovic}
\orcid{0000-0001-7319-8845}
\email{misailo@illinois.edu}
\affiliation{
  \institution{University of Illinois Urbana-Champaign}
  \city{Urbana}
  \country{USA}
}

\begin{abstract}

Modern autonomous vehicle systems use complex perception and control components.
These components can rapidly change during development of such systems,
requiring constant re-testing. Unfortunately, high-fidelity simulations of these
complex systems for evaluating vehicle safety are costly. The complexity also
hinders the creation of less computationally intensive surrogate models.

We present \NAME{}, the first approach for creating surrogate models of complete
(perception, control, and dynamics) autonomous vehicle systems containing
complex perception and/or control components. \NAME{}'s two-stage approach first
replaces complex perception components with a perception model. Then, \NAME{}
constructs a polynomial surrogate model of the complete vehicle system using
Generalized Polynomial Chaos (GPC). We demonstrate the use of these surrogate
models in two applications. First, we estimate the probability that the vehicle
will enter an unsafe state over time. Second, we perform global sensitivity
analysis of the vehicle system with respect to its state in a previous time
step. \NAME{}'s approach also allows for reuse of the perception model when
vehicle control and dynamics characteristics are altered during vehicle
development, saving significant time.

We consider five scenarios concerning crop management vehicles that must not
crash into adjacent crops, self driving cars that must stay within their lane,
and unmanned aircraft that must avoid collision. Each of the systems in these
scenarios contain a complex perception or control component. Using \NAME{}, we
generate surrogate models for these systems, and evaluate the generated models
in the applications described above. \NAME{}'s surrogate models provide an
average speedup of $3.7\times$ for safe state probability estimation (minimum
$2.1\times$) and $1.4\times$ for sensitivity analysis (minimum $1.3\times$),
while still maintaining high accuracy.

\end{abstract}

\maketitle

\section{Introduction}
\label{sec:intro}

Autonomous vehicles, such as self driving cars, unmanned aircraft, and utility
vehicles, are increasingly common. They navigate by perceiving the vehicle's
state (position, heading, etc.), making a control decision based on this
\emph{perceived} state, and moving accordingly. To preserve life and property,
developers specify safety properties that the vehicle must satisfy in certain
scenarios. However, 1)~many sensors have a nondeterministic output (e.g., GPS
and LIDAR) and 2)~the vehicle's software processes sensor values or makes
decisions using complex, possibly imperfect components such as neural networks
and lookup tables. Given this uncertainty in the output of the perception and
control systems, proving that an autonomous vehicle will \emph{never} violate
safety properties in a scenario is impossible or impractical. Developers instead
focus on proving that the vehicles will satisfy the safety property \emph{with
high probability}, using probabilistic and statistical techniques.

\emph{Monte Carlo Simulation} (MCS) is perhaps the most used general method for
estimating the probability of entering an unsafe state over time. However, using
MCS requires a large number of resource-intensive system simulations in order to
get a sufficiently accurate estimate of the vehicle state distribution over
time~\cite{menghi2020approximation,robert2004monte}. These costs are
prohibitively high, especially in development tasks that require iterative
modifications or continuous re-testing of the software, including design
exploration, parameter tuning, or system-software regression testing.

\emph{Surrogate Models} aim to provide an accurate and faster replacement for
the original costly model of many engineering systems. These models can be
created using techniques such as abstraction
refinement~\cite{menghi2020approximation}, machine learning~\cite{mlsurrogate},
or Generalized Polynomial Chaos (GPC)~\cite{XiuGPCBook}. For instance, previous
research has created GPC surrogate models for the equations of vehicle
dynamics~\cite{GPCforDynamics}. However, the presence of complex perception and
control components (which often dominate the original model's execution time)
hinders the application of existing surrogate model construction techniques to
creating surrogate models of the complete vehicle system. For example, the
output of a neural network which processes a camera image to perceive the
vehicle's state is not only affected by the ground truth state, but also
multiple environmental parameters (e.g., weather, lighting, nearby objects,
etc.) which may affect the neural network's output in a nondeterministic manner.
Existing techniques struggle to capture this complicated relationship between
environmental parameters and the neural network output.

\paragraph{Our work.} We present \NAME{} (\textbf{G}PC for \textbf{A}utonomous
Vehicle \textbf{S}ystems), the first approach for creating surrogate models of
complete autonomous vehicle systems which compose complex perception and/or
control components with vehicle dynamics. The resulting surrogate models are
close approximations of the original vehicle models. They provide a faster
alternative to MCS when developers experiment with system components, or tune
various system parameters during the design and testing stages of vehicle
development.

\NAME{} first creates a \emph{perception model} to calculate the
\emph{distribution} of error in the output of the perception system for any
ground truth state. \NAME{} directly samples this error distribution, minimizing
the need for costly experimentation with environmental parameters, image
generation, and neural networks. Second, \NAME{} constructs a polynomial
surrogate model of the \emph{complete} vehicle system (perception, control, and
dynamics) using GPC. \NAME{} also supports systems that contain categorical
variables (e.g., in the control system) using an ancillary model, overcoming a
limitation of GPC. Lastly, because \NAME{}'s perception model is created
independently of downstream components, it can be reused when designers alter
vehicle control and dynamics properties during development of the vehicle,
potentially saving a significant amount of time.

We demonstrate the advantages of using the \NAME{}-generated surrogate model in
two applications. First, we use the surrogate model to \emph{estimate the
probability that the vehicle will reach an unsafe state over time} in five
realistic scenarios. These scenarios model systems used in crop management
vehicles, self driving cars, and unmanned aircraft, with associated safety
properties. Each system uses a complex perception component (ResNet-18 or
LaneNet) or a complex control component (neural network controllers or lookup
tables). We show that the probability of reaching an unsafe state calculated by
the surrogate model closely matches that calculated via the original model for
97\% or more of time steps, while being $3.7\times$ faster on average (minimum
$2.1\times$).

Second, we use the surrogate model for \emph{global sensitivity analysis of the
vehicle system} to initial state perturbations for the same scenarios by
calculating Sobol sensitivity indices~\cite{sobolSens} $1.4\times$ faster on
average (minimum $1.3\times$) with an average error of 0.0004 (maximum 0.06).
Lastly, we also investigate how various hyperparameters of the perception model
affect \NAME{}'s overall accuracy and speedup.

\paragraph{Contributions.} This paper makes several main contributions:
\begin{itemize}

\item \textbf{Surrogate models for complex autonomous vehicle systems.} We
present \NAME{}, a novel approach for using GPC to create fast and accurate
surrogate models of complete autonomous vehicle systems with complex perception
and control components.

\item \textbf{Perception models for GPC.} We present models for complex
perception systems that estimate the error distribution of the original systems,
and use these perception models as part of our \NAME{} approach.

\item \textbf{Implementation.} We implement \NAME{} as a tool which automates
the creation of the perception model and the creation and usage of the GPC
surrogate model. We plan to release \NAME{} as an open-source tool in the
future.

\item \textbf{Evaluation.} We evaluate \NAME{} on five realistic scenarios that
model self driving cars, unmanned aircraft, and crop monitoring vehicles.
\NAME{} provides fast and accurate estimates of the probability that the vehicle
will reach an unsafe state and the global sensitivity indices of the vehicle
system.

\end{itemize}

\section{Example}
\label{sec:ex}

\begin{figure}
  \centering
  \begin{minipage}{0.3\textwidth}
    \centering
    \includegraphics[width=\textwidth]{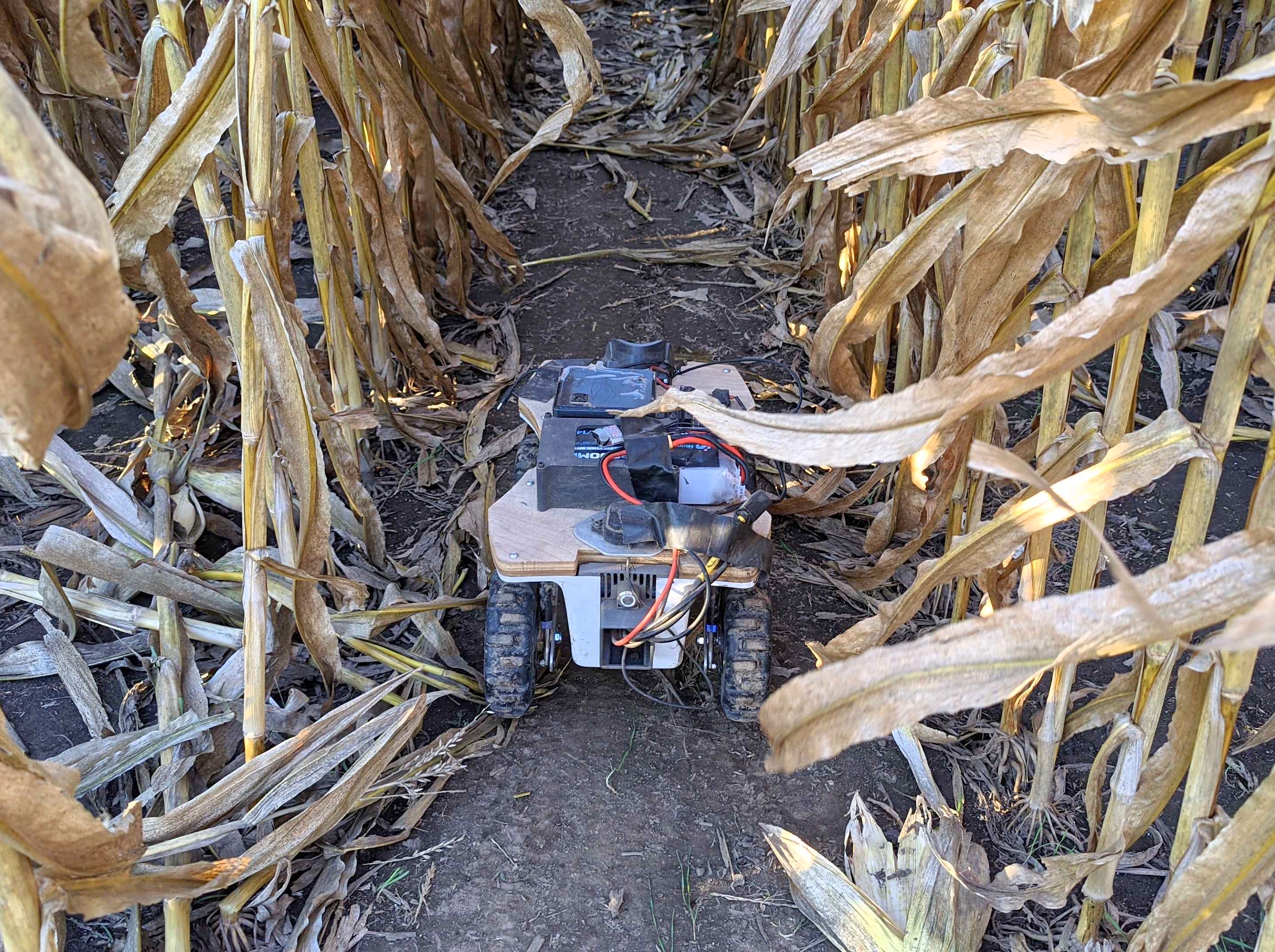}
    \caption{Real-life crop monitoring vehicle}
    \label{fig:ex:irlBot}
  \end{minipage}\hfill
  \begin{minipage}{0.3\textwidth}
    \centering
    \includegraphics[width=\textwidth]{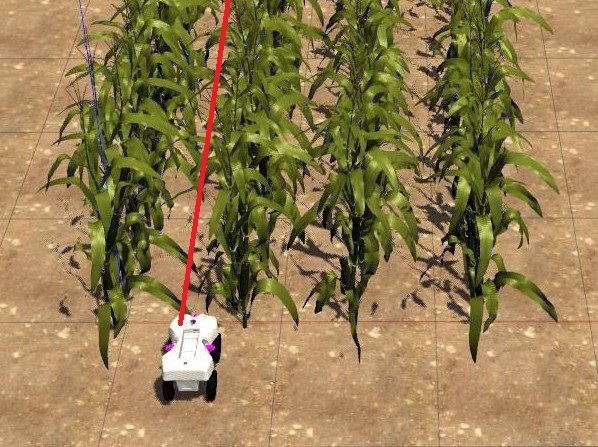}
    \caption{Gazebo simulation -- overhead view}
    \label{fig:ex:overCam}
  \end{minipage}\hfill
  \begin{minipage}{0.3\textwidth}
    \centering
    \includegraphics[width=\textwidth]{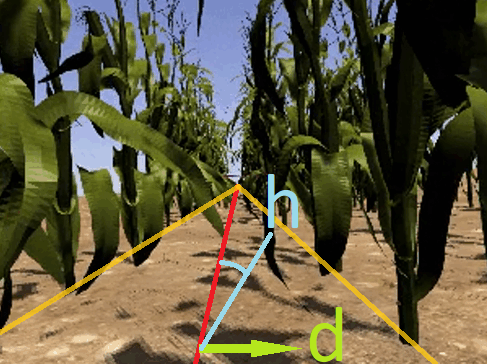}
    \caption{Gazebo simulation -- front camera view}
    \label{fig:ex:fwdCam}
  \end{minipage}
\end{figure}

Consider an autonomous vehicle which travels between two rows of crops in order
to monitor crop growth and detect weeds. Farmers use this information to adjust
fertilizer and herbicide levels for each location. We have adapted this scenario
from~\cite{sivakumar2021learned}.
Figures~\ref{fig:ex:irlBot}-\ref{fig:ex:fwdCam} illustrate the scenario. The
desired path is shown in red.

The top half of Figure~\ref{fig:ex:blockDiag} shows a block diagram
representation of the system model $M_V$ responsible for driving the vehicle
between two rows of crops. First, a camera captures the area in front of the
vehicle. The image depends on the current vehicle state as well as environmental
conditions. Environmental variables include crop types, crop growth stage, crop
model, and lighting conditions. Together, these variables comprise the
environment state space $\mathbb{D}_E$. The neural network analyzes the image to
perceive the current vehicle state. It is a regression neural network, producing
a single state prediction. The relevant state variables in this scenario are:
1)~The \emph{heading} angle $h$, which is the angle between the vehicle's
current heading and the imaginary centerline between the two rows of crops; $h$
can take values in $[-\pi,\pi]$ radians, with $0$ corresponding to the direction
of the centerline. 2)~The \emph{distance} $d$ of the vehicle from the
centerline; $d$ can take values in $[-0.38,0.38]$ meters, with $0$ corresponding
to the centerline. The vehicle state space is therefore $\mathbb{D}_S =
[-\pi,\pi]_h \times [-0.38,0.38]_d$.

As neural networks are inherently approximate, the state perceived by the neural
network may not be the same as the ground truth. The vehicle uses this
approximate heading and distance reading to calculate a steering angle in order
to keep the vehicle on the centerline. Finally, the vehicle moves according to
its constant speed and commanded steering angle.

\paragraph{Unsafe states.} We wish to avoid two undesirable outcomes: 1)~if $|d|
> 0.228m$, the vehicle will hit the crop stems, and 2)~if $|h| > \pi/6$, the
neural network output becomes highly inaccurate and recovery may be impossible.
To avoid these outcomes, we want the vehicle to remain in the \emph{safe
region}, defined as all states within $\mathbb{D}_S^\textit{safe} =
[-\pi/6,\pi/6]_h \times [-0.228,0.228]_d$. Because the vehicle makes steering
decisions based on approximate data, we cannot be certain that the vehicle will
remain safe. Instead, we answer the question: \emph{What is the probability that
the vehicle will remain safe over a period of time?}

\begin{figure}
  \centering
  \includegraphics[width=0.71\textwidth]{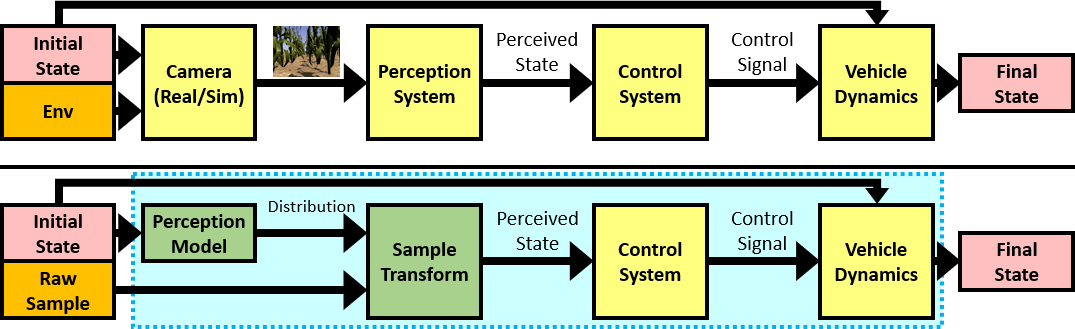}
  \caption{Crop monitor vehicle model: original (top) and abstract (bottom).
            \NAME{} \emph{replaces} the outlined section with a surrogate
            model.}
  \label{fig:ex:blockDiag}
\end{figure}

\paragraph{Monte Carlo Simulation.} In \emph{Monte Carlo Simulation} (MCS), we
simulate the vehicle's movement a large number of times and count the number of
times the vehicle reaches an unsafe state. We use Gazebo
(Figures~\ref{fig:ex:overCam} and~\ref{fig:ex:fwdCam}) for precise control over
the simulation environment.

We simulate 1,000 samples over 100 time steps of 0.1 seconds each. We randomly
sample environmental conditions from an environment distribution $\mathcal{D}_E$
that contains two types of crops, four crop growth stages, crop model
variations, and a range of lighting conditions. We also choose the initial
vehicle state from a normal distribution. For each sample at each time step, we
invoke the autonomous vehicle model, which 1)~captures an image from an
expensive Gazebo simulation in the current vehicle state and environment,
2)~passes the image through the neural network to get the approximate perceived
state, 3)~calculates a steering angle based on the \emph{perceived} state, and
4)~calculates the position of the vehicle after one time step using the
\emph{actual} state and steering angle. At each time step, we count how many
samples are still in a safe state.

\subsection{\NAME{}: Using Generalized Polynomial Chaos}
\label{sec:ex:gpc}

We present \NAME{}, a novel approach for creating surrogate models of complex
vehicle systems using Generalized Polynomial Chaos (GPC). While previous
research (e.g.,~\cite{GPCforDynamics}) has explored using GPC to create models
of vehicle \emph{dynamics}, this is insufficient in our scenario as the process
of capturing and processing the image contributes to over 99\% of the simulation
time. \NAME{} aims to instead use GPC to create a surrogate for the
\emph{entire} vehicle model -- perception, control, and dynamics.

\paragraph{Perception model construction.} The output of the vehicle's
perception neural network depends on the image captured by the front camera.
This image depends not only on the vehicle's state, but also environmental
parameters (e.g., lighting, crop type, crop age, etc.). Given a distribution of
environmental parameters, there is a corresponding output distribution of the
perception neural network for each ground truth state.
Figure~\ref{fig:ex:agbotrealerr} shows the error distribution of the ResNet-18
networks used by this vehicle on the original validation dataset of
\emph{real-world} images collected by the vehicle's camera in a cornfield. The
histogram shows the actual error frequency, while the line shows the fitted
normal distribution. The distribution closely matches the histogram, indicating
that the error of the neural network is normally distributed.

\begin{figure}
\includegraphics[width=.5\columnwidth]{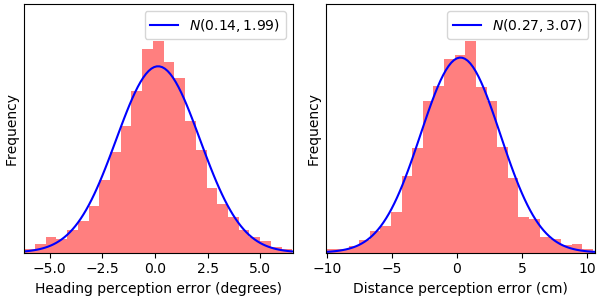}
\caption{Error distribution of the neural network for the validation set of
         \emph{real-world} images.}
\label{fig:ex:agbotrealerr}
\end{figure}

\NAME{} must sample this output distribution when creating the GPC model.
However, 1)~the environmental parameters have a smaller effect on the perceived
state as compared to the vehicle's actual state, and 2)~directly using the
numerous environmental parameters increases the input space over which \NAME{}
must construct the GPC model, which increases model construction time. \NAME{}
therefore abstracts away the actual environmental parameters by creating a
\emph{perception model}. \NAME{} does not require the perception model to be a
perfect abstraction of the neural network\footnote{creating such an abstraction
is at least as hard as solving the neural network verification problem}.
Instead, \NAME{} trains the perception model by selecting an $11 \times 11$ grid
$G$ of ground truth states in the safe region. At each grid point $(h,d)$, it
randomly samples images from $\mathcal{D}_E$, and records the neural network
outputs. It calculates the mean $\mu(h,d)$ and covariance $\sigma^2(h,d)$ of the
outputs at each state. \NAME{} trains a degree 4 polynomial regression model
$M_\mathit{per}$ to predict each component of $\mu$ and $\sigma^2$ at any safe
state. Figure~\ref{fig:ex:agbotpermod} visually compares the neural network
output distribution to the distribution predicted by $M_\mathit{per}$ for images
captured within Gazebo. The red dashed ellipse and the blue solid ellipse show
the $3\sigma$ confidence boundaries for the neural network output distribution
and the distribution predicted by the perception model, respectively. The two
distributions closely match each other, especially when the vehicle is near the
center and pointing straight ahead.

\begin{figure}
\includegraphics[width=.7\columnwidth]{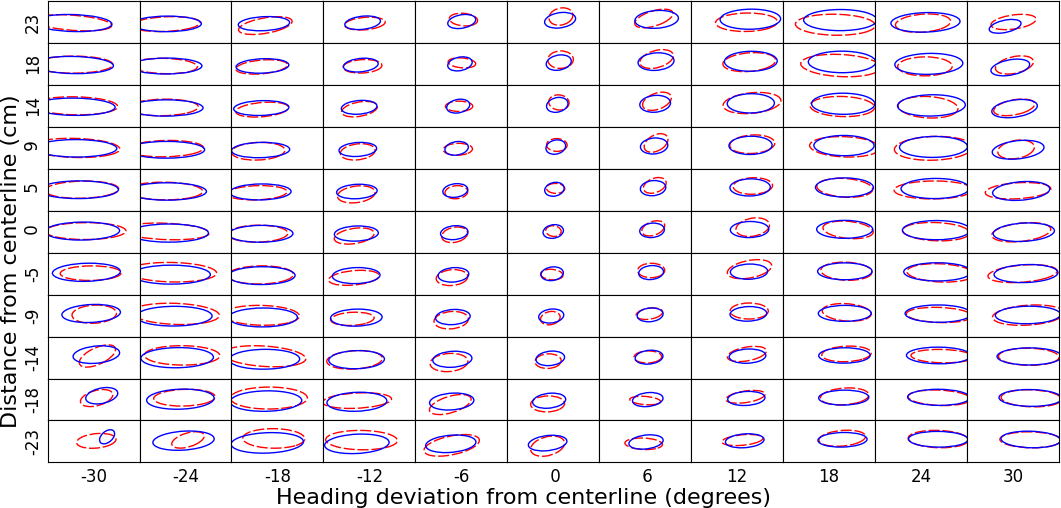}
\caption{Comparison of neural network output distribution (dashed red ellipse)
         to distribution predicted by perception model (solid blue ellipse).
         Each box represents a distinct ground truth state used to construct the
         perception model. The X~and~Y-Axes vary the ground truth heading and
         distance, respectively.}
\label{fig:ex:agbotpermod}
\end{figure}

\paragraph{GPC Surrogate model construction.} Next, \NAME{} creates an abstract
vehicle model $M'_V$, shown in the bottom half of Figure~\ref{fig:ex:blockDiag}.
$M'_V$ first uses the perception model to obtain the neural network output
distribution in the current state. Second, it transforms a sample from a 2D
standard normal distribution into a sample from this output distribution by
multiplying by $\sigma$ and adding $\mu$. The rest of the vehicle model uses
this transformed sample as the perceived state. \NAME{} now uses GPC to create a
polynomial model of the complete system (outlined section of
Figure~\ref{fig:ex:blockDiag}). This model $M_\mathit{GPC}$ is a $4^{th}$ degree
polynomial over 4 variables (2 state variables and a 2D normal distribution
sample) with 70 terms, including 53 interaction terms. \NAME{} \emph{replaces}
the original model in the MCS procedure outlined above with this surrogate model
to estimate the state distribution of the vehicle over time. \NAME{} also
increases the number of samples for distribution estimation to 10,000 for
$M_\mathit{GPC}$ in order to decrease sampling error.

\subsection{Results}
\label{sec:ex:results}

\begin{figure}
  \centering
  \begin{minipage}[t]{0.44\textwidth}
    \centering
    \includegraphics[width=0.49\textwidth]{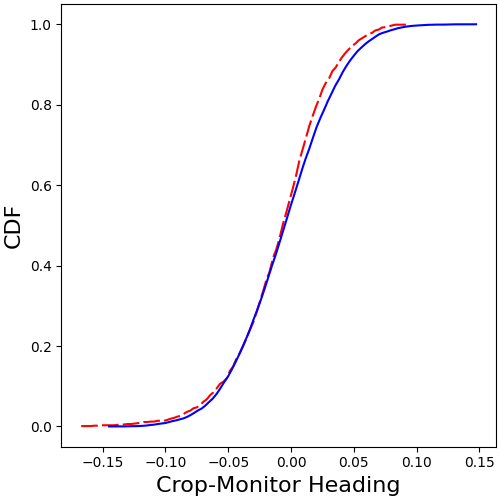}
    \includegraphics[width=0.49\textwidth]{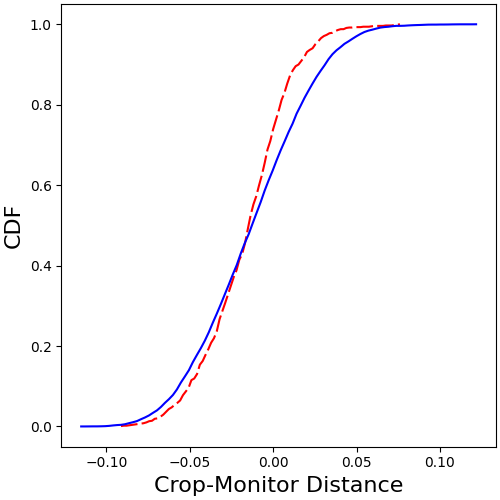}
    \caption{State distribution at the final time step: heading (left) and
             distance from the centerline (right)}
    \label{fig:ex:cdf}
  \end{minipage}\hfill
  \begin{minipage}[t]{0.26\textwidth}
    \centering
    \includegraphics[width=0.85\textwidth]{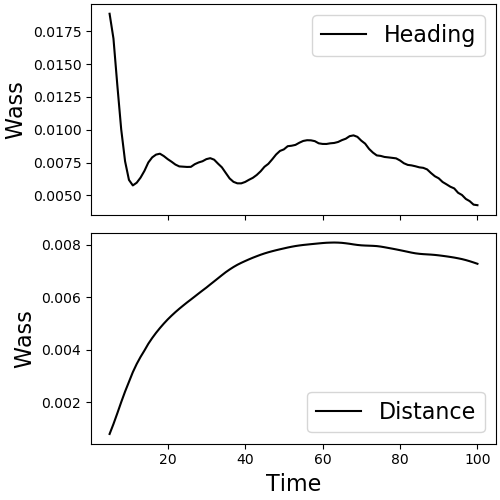}
    \caption{Wasserstein metric (distribution similarity) over time steps}
    \label{fig:ex:wass}
  \end{minipage}\hfill
  \begin{minipage}[t]{0.22\textwidth}
    \centering
    \includegraphics[width=\textwidth]{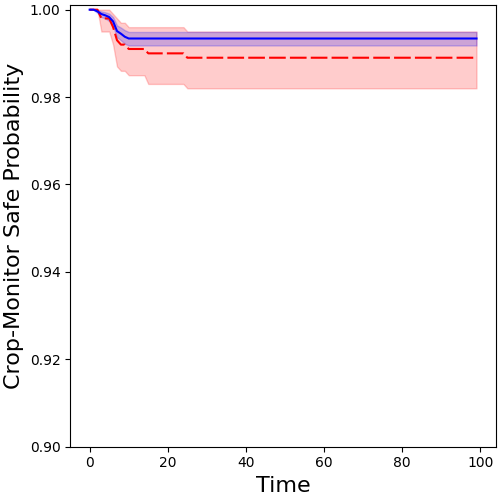}
    \caption{Probability of remaining in a safe state over time steps}
    \label{fig:ex:safeProb}
  \end{minipage}
\end{figure}

\paragraph{Accuracy.} Figure~\ref{fig:ex:cdf} shows a comparison of the heading
and distance distributions after 100 time steps. The X~Axis shows the variable
value and the Y~Axis shows the cumulative probability. The blue solid and red
dashed plots show the distributions estimated using MCS and \NAME{},
respectively. We compare the \NAME{} and MCS distributions using the
Kolmogorov-Smirnov (KS) statistic and the Wasserstein metric.
Figure~\ref{fig:ex:wass} shows how the Wasserstein metric evolves over time. The
X~Axis shows time steps and the Y~Axis shows the Wasserstein metric. The low
Wasserstein metric value indicates good correlation between the two
distributions at all times. The KS statistic also remains below 0.14.
Figure~\ref{fig:ex:safeProb} shows the probability of remaining in a safe state
over time. The X~Axis shows time steps and the Y~Axis shows the probability of
remaining safe. The shaded regions show the 95\% bootstrap confidence interval.
Around each plot, the bootstrap confidence interval indicates the variation that
can occur as a result of sampling error. As \NAME{} evaluates $M_\mathit{GPC}$
for $10\times$ more samples than $M_V$, its confidence interval is smaller. We
use the t-test to check if the safe state probabilities are similar: it passes
for 99 of 100 time \mbox{steps, indicating high similarity.}

\paragraph{Time.} \NAME{} is 2.3x faster than MCS on our hardware. MCS required
19.5~hours. Gathering the training data for the perception model required
8.5~hours. The time required for training the perception model, constructing the
polynomial approximation with GPC, and using the GPC approximation was
negligible in comparison ($<1$ minute). Increasing the number of samples or the
number of time steps would further increase the gap between the two methods,
since the time required to construct the perception model is a one-time cost.

\section{Background}
\label{sec:bkg}

We present key definitions pertaining to GPC. Dedicated books
(e.g.,~\cite{XiuGPCBook}), provide more details.

\paragraph{Orthogonal polynomials.} Assume $X$ is a continuous variable with
support $S_X$ and probability density $p_X : S_X \rightarrow \mathbb R$. Let
$\Psi = \{ \Psi_n | n \in \mathbb N \}$ be a set of polynomials, where $\Psi_n$
is an $n^{th}$ degree polynomial. Then $\Psi$ is a set of orthogonal polynomials
with respect to $X$ if for all $n \ne m$, $\int_{S_X} \Psi_n(x) \Psi_m(x) p_X(x)
dx = 0$. The orthogonal polynomial $\Psi_n$ has $n$ distinct roots in $S_X$.

Orthogonal polynomials exist for several probability distributions. For example,
the Legendre, Hermite, Jacobi, and Laguerre polynomials are orthogonal for the
uniform, normal, beta, and gamma distributions respectively.

\paragraph{Orthogonal polynomial projection (GPC).} Let $f : S_X \rightarrow
\mathbb R$. Then the $N^{th}$ order orthogonal polynomial projection of $f$,
written as $f_N$, with respect to a set of orthogonal polynomials $\Psi$, is:
\begin{equation}
\label{eq:fnApprox}
f_N = \sum\limits_{i=0}^{N} c_i \Psi_i
\quad\text{where}\quad
c_i = \frac{\int_{S_X} f(x) \Psi_i(x) p_X(x) dx}{\int_{S_X} \Psi_i^2(x) p_X(x) dx}
\end{equation}
If $f$ is an $N^{th}$ degree polynomial, then $f_N = f$. Otherwise, $f_N$ is the
\emph{optimal} $N^{th}$ degree polynomial approximation of $f$ w.r.t. $X$, in
the sense that it minimizes $\ell_2$ error, which is calculated as $\int_{S_X}
(f(x) - f_N(x))^2 p_X(x) dx$. As $N\rightarrow\infty$, the $\ell_2$ error
approaches 0, that is, we can construct arbitrarily good approximations of $f$.
$f_N$ is called the \textit{$N^{th}$-order generalized polynomial chaos (GPC)
approximation.}

\paragraph{Lagrange basis polynomials.} Given $N$ points $(x_i,y_i),\ 1 \le i
\le N$, where all $x_i$ are distinct, Equation~\ref{eq:bkg:lagrange} shows the
\emph{Lagrange basis polynomials} $L_i$ for each $i$.
\begin{equation}
    \label{eq:bkg:lagrange}
    L_i(x) = \prod\limits_{\substack{1 \le j \le N \\ j \ne i}} \frac{x - x_j}{x_i - x_j}
\end{equation}

\paragraph{Gaussian quadrature.} To use Equation~\ref{eq:fnApprox}, we must
perform multiple integrations to calculate the coefficients $c_i$ ($i\in\{0
\ldots N\}$). For any non-trivial function $g$, we must use numerical
integration by approximating the integral with the following sum:
\begin{equation}
\label{eq:bkg:quadSum}
\int\limits_{S_X} g(x) p_X(x) dx \approx \sum \limits_{i=1}^{N} w_i g(x_i)
\ \ \text{where}\ \
w_i = \int\limits_{S_X} L_i(x) p_X(x) dx
\end{equation}
We choose $w_i$ and $x_i$ so as to minimize integration error. In \emph{Gaussian
quadrature}, we choose $x_i$ to be the $N$ roots of $\Psi_N$, the $N^{th}$ order
orthogonal polynomial w.r.t. $X$. We calculate the corresponding weights using
the Lagrange basis polynomials $L_i$ (Equation~\ref{eq:bkg:lagrange}) passing
through $x_j\ \forall j \ne i$.

\newcommand{\mbsx}{\hspace{-.1em}\textbf{\em{x}}}

\paragraph{Multivariate GPC.} GPC can be easily extended to the multivariate
case, as long as all \emph{all random variables are independent}. Let $\mathbf X
= (X_1, \ldots, X_d)$ be the $d$ independent random variables (not necessarily
following the same distribution) and let $f$ be a function over $\mathbf X$. The
orthogonal polynomials $\mathbf{\Psi_i}$ for $\mathbf X$ are simply the products
of the orthogonal polynomials $\Psi_{i_1}, \ldots, \Psi_{i_d}$ for $X_1, \ldots,
X_d$ respectively. The GPC approximation closely resembles the one for the
univariate case:
\begin{equation}
\label{eq:bkg:MVGPC}
f_N = \sum\limits_{\mathbf i} c_{\mathbf i} \mathbf{\Psi_i}
\quad\text{where}\quad
c_{\mathbf i} = \frac{\int_{S_{\mathbf X}} f(\mbsx{}) \mathbf{\Psi_i}(\mbsx{}) p_{\mathbf X}(\mbsx{}) d\mbsx{}}
                     {\int_{S_{\mathbf X}} \mathbf{\Psi_i}^2(\mbsx{}) p_{\mathbf X}(\mbsx{}) d\mbsx{}}
\end{equation}
We calculate $c_{\mathbf i}$ using a variant of Equation~\ref{eq:bkg:quadSum} in
which we sum over all dimensions of $\mathbf{i}$.

\paragraph{Global sensitivity (Sobol) indices.} Sobol indices~\cite{sobolSens}
decompose the variance of the model output over the entire input distribution
into portions that depend on subsets of the input variables $X_i$. The
\emph{first order} sensitivity indices show the contribution of a single input
variable to the output variance. For a variable $X_i$, the sensitivity index is
$S_i=V_i/V$. Here, $V=\mathit{Var}_{\mathit{\mathbf X}}(f(\mbsx{}))$ \mbox{is
the total variance and}
\begin{align}
\label{eq:bkg:partvar}
V_i &= \mathit{Var}_{X_i}(E_{\mathit{\mathbf X_{\neg i}}}(f(\mbsx{}) | X_i = x_i)) \\
\text{where}\quad
\mathit{\mathbf X_{\neg i}} &= \{X_1, \ldots, X_d\} \setminus \{ X_i \} \nonumber
\end{align}
We can evaluate Equation~\ref{eq:bkg:partvar} analytically when $f$ is a
polynomial (such as those generated via GPC) and when it is possible to
calculate the moments of each independent component of $\mathbf X$ analytically.
For more complex functions and distributions, it becomes necessary to estimate
Equation~\ref{eq:bkg:partvar} empirically using Monte Carlo
estimators~\cite[Equation~6]{sobolSens}.
\section{\NAME{} Approach}
\label{sec:approach}

We present the \NAME{} approach for creating a polynomial surrogate model of
complex autonomous vehicle systems. \NAME{} consists of three high level steps:
\begin{enumerate}

\item Create a deterministic complete vehicle model.

\item Train a perception model and use it to replace the regression neural
network used for state perception in the vehicle model
(Algorithms~\ref{code:app:perTrain}-\ref{code:app:modVehMod}).

\item Construct a GPC surrogate model from the vehicle model
(Algorithm~\ref{code:app:gpcCons}).

\end{enumerate}
\NAME{} automates almost the entire process of constructing and using the
surrogate model. The user provides the perception model training data, the
distributions of state and random variables, GPC order, and simulation
parameters. \NAME{} infers the optimal degree of polynomial regression for the
perception model using cross-validation to maximize accuracy while preventing
overfitting. We plan to release \NAME{} as open-source software in the future.

\subsection{Creating a deterministic vehicle model}

First, we represent the complete vehicle model as a function of independent
random variables, $M_V : \mathbb{D}_S \times \mathbb{D}_E \times \mathbb{D}_R
\rightarrow \mathbb{D}_S$. $S \in \mathbb{D}_S$ is a vector of state variables
(e.g., position and rotation), $E \in \mathbb{D}_E$ is a vector of
environment-related random variables that affect the neural network (e.g.,
weather and lighting conditions that affect the image processed by the neural
network), and $R \in \mathbb{D}_R$ is a vector of random variables that do not
affect the neural network, but affect other parts of $M_V$. Making $M_V$
deterministic is necessary as GPC produces a deterministic polynomial model over
independent random variables. Because multivariate GPC requires the input
variables to be independent, we remove any input variable dependencies by
isolating their independent components and using those as the input variables
instead.

For the example in Section~\ref{sec:ex}, $\mathbb{D}_S$ is the Cartesian product
of the range of the heading and distance variables, $\mathbb{D}_E$ has variables
that control crop type, crop age, and lighting, and $\mathbb{D}_R$ is unused.
The top half of Figure~\ref{fig:ex:blockDiag} describes $M_V$.

\subsection{Replacing the perception system}
\label{app:permod}

The output of regression neural networks which use camera images to perceive the
vehicle's state is affected by environmental factors unrelated to the vehicle's
state. For a given ground truth state, such perception systems produce a
distribution of possible perceived states. To enable faster sampling of this
output distribution, \NAME{} replaces the perception system with a perception
model prior to constructing the GPC polynomial model.
Algorithm~\ref{code:app:perTrain} shows how \NAME{} creates the perception
model. The steps are as follows:
\begin{enumerate}

\item Let $\mathbb{D}_S^\textit{safe}$ be the set of safe states. We pick a set
of ground truth states $G \subset \mathbb{D}_S^\textit{safe}$, such that $G$ is
an evenly spaced tensor grid that includes the extreme states within
$\mathbb{D}_S^\textit{safe}$.

\item For each $g \in G$, \NAME{} captures a list of images $I_g$ using a
photo-realistic simulator such as Gazebo~\cite{gazeboSim} or
CARLA~\cite{Dosovitskiy17}. For each image, \NAME{} randomly samples $E$ from
the environment distribution $\mathcal{D}_E$ ($E \sim \mathcal{D}_E$). \NAME{}
obtains a large number of images per state ($N_i \ge 350$) to get a high
confidence estimate of the output distributions in the next step.

\item For each $g \in G$, \NAME{} passes $I_g$ through the neural network to
obtain a list of network outputs $O_g$. \NAME{} calculates the mean $\mu_g$ and
covariance $\sigma^2_g$~of~$O_g$.

\item \NAME{} trains a polynomial regression model $M_\mathit{per}$ to predict
the components of $\mu_S$ and $\sigma_S^2$ given $S \in
\mathbb{D}_S^\textit{safe}$.

\end{enumerate}

$M_\mathit{per}$ returns the \emph{distribution parameters} $(\mu_S,\sigma_S^2)$
of possible outputs. We create an \emph{abstracted} vehicle model $M'_V :
\mathbb{D}_S \times \mathbb R^n \times \mathbb{D}_R \rightarrow \mathbb{D}_S$
(Algorithm~\ref{code:app:modVehMod}) to use $M_\mathit{per}$ instead of the
neural network. Instead of a sample from $\mathcal{D}_E$, $M'_V$ accepts a
sample $N$ from a multivariate standard normal distribution $\mathcal N(0, 1)$.
It calculates $\mu_S$ and $\sigma_S^2$ using $M_\mathit{per}$, and transforms
$N$ to a sample from $\mathcal N(\mu_S, \sigma_S)$. $M'_V$ uses this sample as
the perceived state for the rest of the model consisting of the vehicle's
control and dynamics systems.

For the example in Section~\ref{sec:ex}, $\mathbb{D}_S^\textit{safe}$ is the
Cartesian product of the safe range of the heading and distance variables, $G$
is an $11 \times 11$ grid within $\mathbb{D}_S^\textit{safe}$, and
$M_\mathit{per}$ is a degree 4 polynomial regression model that predicts the
five parameters of the distribution of the perceived state (mean heading and
distance, heading and distance variance, and correlation). The bottom half of
Figure~\ref{fig:ex:blockDiag} describes $M'_V$. As there are two state
variables, the raw sample is drawn from a 2D standard distribution.

\begin{algorithm}[H]\algsize
    \begin{algorithmic}[1]
      \Input{$G$: set of ground truth states;
      $\mathcal{D}_E$: distribution of environment variables;
      $N_i$: number of images to capture for each $g \in G$}
      \Returns{$M_\mathit{per}$: trained perception model;
      $d_\mathit{per}$: polynomial degree of perception model}
      \Function{TrainPerceptionModel}{$G, \mathcal{D}_E, N_i$}
      \State $\mathit{TrainTestData} \gets \{\;\}$
      \For{$g \in G$}
        \State $I_g \gets [\;]$
        \For{$i$ from $1$ to $N_i$}
          \State $E \sim \mathcal{D}_E$
          \State $\mathit{Img} \gets \Call{CaptureImage}{g, E}$
          \State $I_g \gets I_g :: \mathit{Img}$
        \EndFor
        \State $O_g \gets \Call{NeuralNetwork}{I_g}$
        \State $\mu_g \gets \Call{Mean}{O_g}$
        \State $\sigma^2_g \gets \Call{Covariance}{O_g}$
        \State $\mathit{TrainTestData} \gets \mathit{TrainTestData}[g \mapsto (\mu_g, \sigma^2_g)]$
      \EndFor
      \State $M_\mathit{per}, d_\mathit{per} \gets \Call{PolyRegressionModel}{\mathit{TrainTestData}}$
      \EndFunction
    \end{algorithmic}
    \caption{Training the perception model}
    \label{code:app:perTrain}
  \end{algorithm}

\begin{algorithm}[H]\algsize
    \begin{algorithmic}[1]
      \Input{$S$: initial state of vehicle;
      $N$: raw sample to be transformed into neural network output sample;
      $R$: other random variables;
      $M_\mathit{per}$: trained perception model}
      \Returns{$S'$: state of vehicle after one time step}
      \Function{$M'_V$}{$S, N, R, M_\mathit{per}$}
        \State $\mu_S, \sigma^2_S \gets M_\mathit{per}(S)$
        \State $O_S \gets \Call{Transform}{N, \mu_S, \sigma^2_S}$ \label{line:app:sampTransform}
        \State $S' \gets \Call{VehicleControlAndDynamics}{S, O_S,  R}$ \label{line:app:ctrlDynCall}
      \EndFunction
    \end{algorithmic}
    \caption{Abstracted vehicle model}
    \label{code:app:modVehMod}
  \end{algorithm}

We assume the output is distributed according to $\mathcal N(\mu_S, \sigma_S)$
as we can easily define the distribution in terms of $\mu_S, \sigma_S^2$. We
have observed that, for real-world input images, the output distribution of
perception neural networks indeed tends to be normally distributed
(Figure~\ref{fig:ex:agbotrealerr}). However, we can use the same method for
other distributions if 1)~the parameters of the fitted distribution vary
smoothly as the ground truth changes, and 2)~orthogonal polynomials
corresponding to that distribution exist for use with GPC (e.g., the uniform and
beta distributions).

\subsection{GPC for the complete vehicle system}
\label{sec:app:gpccons}

Algorithm~\ref{code:app:gpcCons} shows how \NAME{} constructs the GPC
approximation of the abstracted vehicle model:
\begin{enumerate}

\item \NAME{} constructs a joint distribution $J$ over $\mathbb{D}_S \times
\mathbb R^n \times \mathbb{D}_R$. For the state variables, \NAME{} chooses a
normal or truncated normal distribution $\mathcal{D}_S^\textit{safe}$ over the
safe state space $\mathbb{D}_S^\textit{safe}$. \NAME{} uses $\mathcal N(0, 1)$
for the perception model raw sample. For the other random variables, \NAME{}
uses their actual distribution~$\mathcal{D}_R$.

\item \NAME{} constructs the basis polynomials which are orthogonal w.r.t. $J$.
Increasing the order $o_\mathit{gpc}$ increases accuracy, but also runtime.

\item \NAME{} chooses Gaussian quadrature nodes for $J$ and calculates the
corresponding weights using Equation~\ref{eq:bkg:quadSum}.

\item \NAME{} evaluates the abstracted model $M'_V$ at the chosen samples.

\item \NAME{} calculates the GPC polynomial model as the weighted sum of the
orthogonal polynomials using Equation~\ref{eq:fnApprox}.

\end{enumerate}

For the example in Section~\ref{sec:ex}, $J=\mathcal N(0, \pi/18) \times
\mathcal N(0, 0.076) \times \mathcal N(0, 1)^2$. Specifically, for the state
variables, we choose normal distributions such that the edge of the safe state
space lies at the $3\sigma$ boundary. $\Psi$ is a set of 70 multivariate Hermite
polynomials (the orthogonal polynomials for normal distributions). These
polynomials are formed by taking the product of the Hermite polynomials for each
variable in $J$, such that the total order is at most 4. $X$ and $W$ are a set
of 624 quadrature nodes and the corresponding weights. Lastly, $M_\mathit{GPC}$
is an order 4 polynomial over the variables in $J$.

\begin{algorithm}[H]\algsize
    \begin{algorithmic}[1]
      \Input{$\mathcal{D}_S^\textit{safe}$: distribution over $\mathbb{D}_S^\textit{safe}$;
      $\mathcal{D}_R$: distribution of other random variables;
      $o_\mathit{gpc}$: order of GPC model;
      $M'_V$: abstracted vehicle model}
      \Returns{$M_\mathit{GPC}$: \NAME{} surrogate model}
      \Function{CreateGPCModel}{$\mathcal{D}_S^\textit{safe}, \mathcal{D}_R, o_\mathit{gpc}, M'_V$}
        \State $J \gets \Call{Join}{\mathcal{D}_S^\textit{safe}, \mathcal N(0, 1), \mathcal{D}_R}$
        \State $\Psi \gets \Call{GenerateOrthogonalPolynomials}{o_\mathit{gpc}, J}$
        \State $X, W \gets \Call{GenerateQuadNodesAndWeights}{o_\mathit{gpc}, J}$
        \State $Y \gets [M'_V(x)$ for $x \in X]$
        \State $M_\mathit{GPC} \gets \Call{QuadratureAndGPC}{\Psi, X, W, Y}$
      \EndFunction
    \end{algorithmic}
    \caption{\NAME{} surrogate model construction}
    \label{code:app:gpcCons}
  \end{algorithm}

\paragraph{Categorical state variables.} Some vehicle models have categorical
state variables. For example, many control systems operate in multiple modes.
The control system can switch modes if certain conditions are met, and the
current mode affects the control decisions. In this case, the current mode is a
categorical state variable.

Unlike categorical variables, polynomial inputs and outputs are continuous
intervals. Therefore, we cannot use GPC for predicting categorical variables, or
accept a categorical variable as an input to the GPC model. \NAME{} solves this
problem by using multiple GPC sub-models and using a separate classifier for
predicting categorical variables. This procedure is known as \emph{multi-element
GPC} (ME-GPC).

Consider a vehicle model $M_V$ whose state includes a categorical variable $X$
with the domain $\mathbb D_X = \{x_1, \ldots, x_k\}$. \NAME{} uses GPC to create
a separate polynomial model for each $x_i \in \mathbb D_X$. The compound
surrogate model chooses which of these sub-models to use based on the current
vehicle state. In this way, \NAME{} calculates all output state variables
except~$X$. For predicting $X$, \NAME{} creates an ancillary classifier, which
it trains as follows:
\begin{enumerate}

\item \NAME{} picks a set of states $G$ from the safe states
$\mathbb{D}_S^\textit{safe}$, such that $G$ is an evenly spaced tensor grid that
includes the most extreme states within $\mathbb{D}_S^\textit{safe}$, and
includes all categorical values in $\mathbb D_X$.

\item For each $g \in G$, \NAME{} evaluates $M_V(g, E, R)$ for a large number of
samples ($\ge 350$) of $E$ and $R$. \NAME{} isolates the value of $X$ in the
output and determines the most frequent value $X_\mathit{mode}$.

\item \NAME{} trains a classifier to predict $X_\mathit{mode}$ for any $S \in
\mathbb{D}_S^\textit{safe}$.

\end{enumerate}
\NAME{} returns the value $X_\mathit{mode}$ predicted by the classifier as the
value of $X$ in the output. This method can be generalized to multiple
categorical variables, but the number of separate GPC approximations required
can become impractical.

\subsection{Applications of the \NAME{} surrogate model}

\paragraph{Calculating probability of remaining in a safe state over time.}
\NAME{} uses the surrogate model to estimate the probability that the vehicle
will remain in a safe state over time (Algorithm~\ref{code:app:gpcUseApprox}).
\NAME{} creates initial joint samples using the initial state distribution, the
raw sample distribution, and the distribution of other random variables. At each
time step, \NAME{} evaluates the surrogate model on each joint sample to get the
next state. If the next state is safe, \NAME{} chooses new random values to
prepare the joint sample for the next time step. Finally, \NAME{} calculates and
logs the fraction of samples that are still in the safe region.

\begin{algorithm}[H]\algsize
    \begin{algorithmic}[1]
      \Input{$N_s$: number of samples to use for distribution estimation;
      $T$: number of time steps;
      $\mathcal{D}_S^0$: initial state distribution;
      $\mathcal{D}_R$: distribution of other random variables;
      $\mathit{Pred}$: safety predicate;
      $M_\mathit{GPC}$: constructed \NAME{} surrogate model}
      \Returns{$P_\mathit{safe}$: probability that the vehicle is safe until each time step}
      \Function{EstimateSafeProb}{$N_s, T, \mathcal{D}_S^0, \mathcal{D}_R, \mathit{Pred}, M_\mathit{GPC}$}
        \State $X \gets [\;]$; $P_\mathit{safe} \gets \{\;\}$
        \For{$i$ from $1$ to $N_s$}
          \State $x \sim \Call{Join}{\mathcal{D}_S^0, \mathcal N(0, 1), \mathcal{D}_R}$ \label{line:app:sampInit}
          \State $X \gets X :: x$
        \EndFor
        \For{$t$ from $1$ to $T$}
          \State $X' \gets [\;]$
          \For{$x \in X$}
            \State $S' \gets M_\mathit{GPC}(x)$
            \If{$\Call{Safe}{S',\mathit{Pred}}$}
              \State $N' \sim \mathcal N(0, 1)$
              \State $R' \sim \mathcal{D}_R$
              \State $X' \gets X' :: (S', N', R')$
            \EndIf
          \EndFor
          \State $X \gets X'$
          \State $P_\mathit{safe} \gets P_\mathit{safe}[t \mapsto |X| / N_s]$
        \EndFor
      \EndFunction
    \end{algorithmic}
    \caption{Estimating the probability of remaining in a safe state over time}
    \label{code:app:gpcUseApprox}
  \end{algorithm}

\begin{algorithm}[H]\algsize
    \begin{algorithmic}[1]
      \Input{$N_s$: number of samples to use for sensitivity estimation;
      $i$: index of state variable to calculate sensitivity for;
      $\mathcal{D}_S$: current state distribution;
      $\mathcal{D}_R$: distribution of other random variables;
      $M$: model ($M_\mathit{GPC}$ or $M'_V$)}
      \Returns{$S_i$: sensitivity index of selected state variable}
      \Function{EstimateSensitivity}{$N_s, i, \mathcal{D}_S, \mathcal{D}_R, M$}
        \State $Y_0 \gets [\;]$; $Y_1 \gets [\;]$
        \For{$i$ from $1$ to $N_s$}
          \State $x_0, x_1 \sim \Call{Join}{\mathcal{D}_S, \mathcal N(0, 1), \mathcal{D}_R}$
          \State $y_0 \gets M(x_0)$; $y_1 \gets M(x_1[i \mapsto x_0[i]])$
          \State $Y_0 \gets Y_0 :: y_0$; $Y_1 \gets Y_1 :: y_1$
        \EndFor
        \State $S_i \gets (\Call{Mean}{Y_0 * Y_1} - \Call{Mean}{Y_0}^2) / \Call{Var}{Y_0}$
      \EndFunction
    \end{algorithmic}
    \caption{Using estimators to calculate sensitivity indices}
    \label{code:app:mcSens}
  \end{algorithm}

\paragraph{Computing Sobol indices.} \NAME{} uses $M_\mathit{GPC}$ to calculate
Sobol sensitivity indices in two ways. In the \emph{analytical} approach,
\NAME{} calculates sensitivity indices by first calculating conditional expected
values as polynomials and then calculating their variance
(Equation~\ref{eq:bkg:partvar}). In the \emph{empirical} approach, \NAME{}
instead uses Monte Carlo estimators (\cite[Equation~6]{sobolSens} as implemented
in Algorithm~\ref{code:app:mcSens}).

While the analytical approach precisely calculates sensitivity indices, it is
relatively slow as \NAME{} must calculate expected values as a function of the
variable whose sensitivity is being calculated. The empirical approach becomes
more accurate as the number of samples increases. Despite this, it can be faster
than the first approach due to the speed of evaluating $M_\mathit{GPC}$.

\paragraph{Rapid iteration.} During development of an autonomous vehicle system,
the vehicle model can change rapidly as the perception neural network or vehicle
control parameters are tweaked. \NAME{} enables faster testing of these
prototypes thanks to its compositional approach. If the same perception system
is used while changing the control and dynamics, then \NAME{} saves time by
reusing the existing perception model. If the perception system is changed,
\NAME{} must rerun
\mbox{Algorithms~\ref{code:app:perTrain}-\ref{code:app:gpcCons}}, but if doing
so is faster than using MCS, the time saved adds up with each iterative change.

\subsection{Properties of the \NAME{} approach}
\label{sec:app:prop}

\paragraph{Accuracy.} Multiple \NAME{} parameters affect the accuracy of the GPC
model: the size of the tensor grid $|G|$, the number of images taken for each
grid point $N_i$, the degree of polynomial regression used for the perception
model, and the GPC order $o_\mathit{gpc}$. Under certain conditions, \NAME{}
converges in distribution to the exact solutions.

\renewcommand*{\proofname}{Proof Sketch}

\begin{lemma}[Perception Model Convergence]
\label{lem:perModConv}
Assume that 1)~for the given environment distribution $\mathcal{D}_E$, the
distribution of the outputs of a perception neural network $\mathcal{N}$ in any
ground truth state $S$ is Gaussian over the perceived state, and 2)~each
component of the distribution parameters $(\mu_S,\sigma_S^2)$ is an analytic
function of $S$. Then, the output distribution of the perception model
$M_\mathit{per}$ in any state approaches $\mathcal{N}$'s output distribution at
that state as $|G|$, $N_i$, and $d_\mathit{per}$ increase.
\end{lemma}
\begin{proof}
Increasing $|G|$ increases the number of ground truth states used to train the
perception model. Increasing $N_i$ increases the accuracy of $\mathcal{N}$'s
output distribution parameters calculated at each $S$. Since these distribution
parameters are analytic functions of $S$, they can be calculated using a Taylor
series over $S$. After increasing the number and accuracy of training data
points, the accuracy of the perception model can be arbitrarily increased by
increasing $d_\mathit{per}$\footnote{Increasing $d_\mathit{per}$ without also
increasing $|G|$ leads to overfitting.}.
\end{proof}

We use standard statistical tests such as the Shapiro-Wilk test to check if
$\mathcal{N}$'s outputs have a Gaussian distribution for the environment
distribution $\mathcal{D}_E$ used in Section~\ref{app:permod}. We have observed
this to be true in practice (Figure~\ref{fig:ex:agbotrealerr}). We can also use
a different base distribution (and corresponding orthogonal polynomials for GPC)
if fits the data better across the state space.

Practically, controlling the error of the perception model (or any approximation
of a neural network) is an open
problem~\cite{regressionAltNN,decomposingNNModules}. Precise analytic
calculation of the perception model error is intractable, but we can empirically
estimate the error.

\begin{lemma}[GPC Error Bound]
\label{lem:gpcErr}
Assume the control system and vehicle dynamics in $M'_V$ are differentiable.
Then, the root mean square (RMS) error of the output of the \NAME{} model
$M_\mathit{GPC}$ with respect to the output of $M'_V$ is bounded.
\end{lemma}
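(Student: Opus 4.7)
The plan is to reduce the claimed RMS bound to the standard $\ell_2$ optimality property of the GPC projection stated in the background section, after verifying square-integrability of the abstracted vehicle model $M'_V$ with respect to the joint distribution $J$. Concretely, the RMS error of $M_\mathit{GPC}$ relative to $M'_V$ under $J$ is exactly the square root of $\int_{S_J}(M'_V(x)-M_\mathit{GPC}(x))^2 p_J(x)\,dx$, so it suffices to bound this integral.

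First, I would establish that $M'_V$ is square-integrable with respect to $p_J$. The inputs of $M'_V$ are the truncated state distribution $\mathcal{D}_S^S$ (supported on the bounded safe region $\mathbb{D}_S^S$), the standard normal raw sample $N$, and the downstream noise $\mathcal{D}_R$. Inside $M'_V$, the polynomial regressors predicting $\mu_S$ and $\sigma_S^2$ are polynomials of bounded inputs, hence bounded; the affine transform $O_S = \mu_S + \sigma_S N$ is therefore a Gaussian in the raw sample with bounded mean/variance functions, so $O_S$ has finite second moment uniformly over $S$. Assuming the vehicle control and dynamics map is at most polynomially bounded in its arguments (the standard regularity assumption for GPC applicability, already implicit in the construction in Section 4), the composition $M'_V$ inherits finite second moment, i.e. $\int M'_V(x)^2 p_J(x)\,dx < \infty$.

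Second, I would invoke the $\ell_2$-optimality property of orthogonal polynomial projection from Equation~\ref{eq:fnApprox}: among all polynomials of total degree at most $o_\mathit{gpc}$ in the basis orthogonal w.r.t.\ $J$, the coefficient choice used by \NAME{} minimizes $\int (M'_V-M_\mathit{GPC})^2 p_J\,dx$. In particular, using the zero polynomial as a trivial competitor yields the crude upper bound
\begin{equation*}
\int_{S_J}(M'_V(x)-M_\mathit{GPC}(x))^2 p_J(x)\,dx \;\le\; \int_{S_J} M'_V(x)^2 p_J(x)\,dx \;<\; \infty.
\end{equation*}
Taking the square root gives a finite RMS error bound, which is exactly the lemma's claim. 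For the multivariate setting, Equation~\ref{eq:bkg:MVGPC} gives the same optimality termwise per output coordinate, so the vector-valued RMS bound follows by summing the per-coordinate bounds.

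The main obstacle is the square-integrability step: the raw sample $N$ is unbounded, so one must rule out pathological blow-up in the composition with control and dynamics. I would handle this by observing that $\sigma_S$ is uniformly bounded over the compact safe region (the regression target is trained on a tensor grid in $\mathbb{D}_S^S$, producing a continuous polynomial on a compact domain), so $O_S$ is sub-Gaussian; any standard control law (saturating steering, bounded-derivative dynamics, or even a Lipschitz map) then preserves the finite second moment. Once square-integrability is in hand, the remainder of the argument is an immediate appeal to the GPC optimality property, so no further estimation is required.
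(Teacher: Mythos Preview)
Your argument is correct for the lemma as stated, but it is a genuinely different route from the paper's. The paper's proof is a one-line citation: it appeals to the spectral convergence results in Xiu's book (Theorem~3.6) and Ernst et al., which give the quantitative estimate that the RMS error of the GPC projection scales like $o_\mathit{gpc}^{-p}$ for some $p>0$ depending on the smoothness of $M'_V$; boundedness is then immediate because $o_\mathit{gpc}^{-p}$ is finite for any fixed order. Your approach instead works from first principles: you argue directly that $M'_V \in L^2(J)$ (using compactness of $\mathbb{D}_S^S$, boundedness of the regressed $\mu_S,\sigma_S$, sub-Gaussianity of $O_S$, and a growth assumption on the control/dynamics map), and then invoke the $\ell_2$-optimality of the orthogonal projection against the zero polynomial to conclude $\|M'_V - M_\mathit{GPC}\|_{L^2(J)} \le \|M'_V\|_{L^2(J)} < \infty$. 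Your route is more self-contained and makes the regularity hypotheses explicit, which is an advantage. The trade-off is that your bound is independent of $o_\mathit{gpc}$ and therefore does not by itself yield Corollary~\ref{cor:gpcConverge} (that the error vanishes as $o_\mathit{gpc}\to\infty$), whereas the paper's cited rate gives both the lemma and the corollary in one stroke. A minor caveat shared by both arguments: $M_\mathit{GPC}$ in Algorithm~\ref{code:app:gpcCons} is computed via Gaussian quadrature rather than exact integration, so it is the pseudospectral approximation rather than the exact $L^2$ projection; both the paper and your proposal tacitly identify the two, which is standard practice but worth noting.
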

\begin{proof}
From \cite[Theorem~3.6]{XiuGPCBook} and Ernst et
al.~\cite{ernst2012convergence}, which state that the RMS error of a GPC
approximation is proportional to $o_\mathit{gpc}^{-p}$, where $p$ is a positive
value that depends on the differentiability of the function being approximated.
The process of generating a neural network output sample through the perception
model is a polynomial evaluation followed by an affine transform -- both are
differentiable operations. The control system and dynamics are differentiable by
assumption. Finally, composing differentiable functions yields a differentiable
function.
\end{proof}

\noindent
$M_\mathit{GPC}$ is the optimal polynomial model of $M'_V$ for any
$o_\mathit{gpc}$ (\cite{XiuGPCBook}[Equation~5.9]). In practice, control systems
may not be differentiable everywhere (e.g., due to mode switching), but the
differentiability of vehicle dynamics, coupled with a short duration time step,
limit negative effects on accuracy.

\begin{corollary}[GPC Convergence]
\label{cor:gpcConverge}
As $o_\mathit{gpc} \to \infty$, RMS error of GPC approaches 0, that is,
$M_\mathit{GPC}$ can be an arbitrarily close approximation of $M'_V$.
\end{corollary}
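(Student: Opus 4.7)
The plan is to derive the corollary as a direct consequence of Lemma~\ref{lem:gpcErr}. First, I would invoke the lemma to obtain a quantitative bound of the form $\mathrm{RMS}(M_\mathit{GPC}, M'_V) \le C \cdot o_\mathit{gpc}^{-p}$ for some finite constant $C > 0$ and some exponent $p > 0$ determined by the smoothness of $M'_V$ (as guaranteed by the cited results of \cite{XiuGPCBook} and Ernst et al.). Nothing further about the internal structure of $M'_V$ is needed beyond this bound.

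With the bound in hand, the convergence is immediate: since $p$ is strictly positive, $o_\mathit{gpc}^{-p} \to 0$ monotonically as $o_\mathit{gpc} \to \infty$, so the RMS error tends to zero. To make the ``arbitrarily close'' formulation precise, I would provide a constructive witness: for any target $\epsilon > 0$, choosing any $o_\mathit{gpc} \ge \lceil (C/\epsilon)^{1/p} \rceil$ suffices to guarantee $\mathrm{RMS}(M_\mathit{GPC}, M'_V) \le \epsilon$. This handles both the asymptotic limit and the approximation claim in one step.

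The only point that requires caution is the positivity of the exponent $p$, which is delegated to the cited convergence theory. The footnote of Lemma~\ref{lem:gpcErr} already flags that control components in $M'_V$ may be non-differentiable, which would reduce $p$ but still leave it strictly positive under mild regularity (e.g.\ piecewise smoothness with bounded variation). If $M'_V$ were allowed to exhibit genuine jump discontinuities, the standard GPC argument would degrade and one would instead partition the input domain into regions on which $M'_V$ is smooth and apply the lemma piecewise --- this is the multi-element GPC route the paper already invokes for categorical state variables. Under the hypotheses of Lemma~\ref{lem:gpcErr}, however, no such workaround is required, and the corollary follows in essentially two lines, so I do not anticipate any substantive obstacle in the proof.
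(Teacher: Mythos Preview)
Your proposal is correct and essentially matches the paper's own treatment: the paper states the corollary immediately after Lemma~\ref{lem:gpcErr} without a separate proof, relying on the fact that the bound $\propto o_\mathit{gpc}^{-p}$ with $p>0$ tends to zero. Your additional $\epsilon$-argument and discussion of the non-differentiability caveat are more explicit than anything the paper writes, but the underlying reasoning is identical.
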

\begin{proof}
From Lemma~\ref{lem:gpcErr}, the RMS error is proportional to
$o_\mathit{gpc}^{-p}$, where $p$ is positive. Then, $\lim\limits_{o_\mathit{gpc} \to
\infty} o_\mathit{gpc}^{-p} = 0$.
\end{proof}

\begin{theorem}[\NAME{} Convergence]
Assume that the distribution of the outputs of a perception neural network
$\mathcal{N}$ in any ground truth state is Gaussian. Then, the \NAME{} model
$M_\mathit{GPC}$ converges in output distribution to the original vehicle model~$M_V$.
\end{theorem}
\begin{proof}
Since we can construct an arbitrarily accurate perception model
(Lemma~\ref{lem:perModConv}), we can use it to obtain accurate neural network
output samples for any state in $M'_V$. The GPC model can be made an arbitrarily
accurate approximation of $M'_V$ (Corollary~\ref{cor:gpcConverge}), and thus of
$M_V$.
\end{proof}

\renewcommand*{\proofname}{Proof}

\paragraph{Runtime.} The dominant factor for runtime is the required number of
evaluations of $M_V$. To gather data for the perception model, \NAME{} requires
$\Theta(|G| N_i)$ evaluations of $M_V$. The amount of time required to train
$M_\mathit{per}$ and construct $M_\mathit{GPC}$ is insignificant in comparison.

For state distribution estimation over time, MCS requires $\Theta(N_s T)$
evaluations of $M_V$ ($N_s$ being the number of samples used for distribution
estimation), while \NAME{} requires the same number of evaluations of the much
faster $M_\mathit{GPC}$. For estimating sensitivity indices using estimators, we
must evaluate either $M'_V$ or $M_\mathit{GPC}$ $\Theta(N_s)$ times, followed by
mean and variance calculations.
\begin{table}
\centering
\small
\caption{\NAME{} benchmarks}
\label{tab:bench}
\begin{tabular}{@{\hskip 0em}l@{\hskip .7em}l@{\hskip .7em}l@{\hskip .7em}r@{\hskip .1em}c@{\hskip .1em}l@{\hskip .7em}c@{\hskip 0em}}
\textbf{Benchmark} & \textbf{Perception} & \textbf{Control} & \multicolumn{3}{l}{\textbf{Replacement}}
& $\mathbf{dim_{s/r}}$\\

\midrule

Crop-Monitor & ResNet-18$\times2$ & Skid-Steer & Perc & $\rightarrow$ & Poly Reg & 2/2 \\

Car-Straight & LaneNet & Pure Pursuit & Perc & $\rightarrow$ & Poly Reg & 2/2 \\

Car-Curved & LaneNet & Pure Pursuit & Perc & $\rightarrow$ & Poly Reg & 2/2 \\

ACAS-Table & Ground Truth & ACAS-Xu Table & Ctrl & $\rightarrow$ & Dec Tree & 4/0 \\

ACAS-NN & Ground Truth & ACAS-Xu NN & Ctrl & $\rightarrow$ & Dec Tree & 4/0 \\

\end{tabular}
\end{table}

\section{Methodology}
\label{sec:methodology}

\paragraph{Benchmarks.} We chose five benchmarks that include autonomous vehicle
systems such as self driving cars, unmanned aircraft, and crop monitoring vehicles.
Table~\ref{tab:bench} shows details of the benchmarks. Columns~2~and~3 state the
vehicle's perception and control system, respectively. Column~4 indicates if
\NAME{} made a replacement in the perception (Perc) or control (Ctrl) system,
and the nature of the replacement (Poly Reg: polynomial regression, Dec Tree:
decision tree). Column~5 states the number of state and random variable
dimensions. The benchmarks are:
\begin{itemize}

\item \textbf{Crop Monitoring Vehicle.} A vehicle that travels between two rows
of crops and must avoid hitting them. This is our main example
(Section~\ref{sec:ex}).

\item \textbf{Self-Driving Car on a Straight Road.} A vehicle that must drive
within a road lane ($\mathbb{D}_S^\textit{safe} \equiv |\mathit{heading}| \le
\pi/12 \wedge |\mathit{distance}| \le 1.2m$). It uses LaneNet to perceive the
lane boundaries and uses the pure pursuit controller. We derive this benchmark
from~\cite{9294366} and use~\cite{lanenet-lane-detection}.

\item \textbf{Self-Driving Car on a Curved Road.} Similar to the previous
benchmark, but the vehicle must drive on a circular road of radius 100m.

\item \textbf{Unmanned Aircraft Collision Avoidance (Lookup Table).} An unmanned
aircraft that must avoid a near miss with an intruder
($\mathbb{D}_S^\textit{safe} \equiv |\mathit{separation}| \ge 0.1524 km$). The
aircraft uses ACAS-Xu lookup tables from~\cite{horizontalCAS}. As this model's
state includes a categorical variable (the previous ACAS advisory), we use
ME-GPC and predict the next advisory using a decision tree as the ancillary
model.

\item \textbf{Unmanned Aircraft Collision Avoidance (Neural Network).} Similar
to the previous benchmark, but uses a neural network from~\cite{horizontalCAS}
trained to replace the lookup table.

\end{itemize}

\paragraph{Implementation and experimental setup.} We performed our experiments
on machines with a Quadro P5000 GPU, using a single Xeon CPU core. We implement
\NAME{} in Python, using the \texttt{chaospy} library~\cite{chaospy}. We use
Gazebo 11~\cite{gazeboSim} to capture images for Crop-Monitor, Car-Straight, and
Car-Curved benchmarks. We run all image processing neural networks on the GPU.
We run ACAS-NN entirely on CPU as its network is small.

For estimating state distribution over time, we compare the \NAME{}-generated
$M_\mathit{GPC}$ to a MCS baseline using $M_V$. We set \NAME{} parameters as
follows: $G$ is a $11 \times 11$ grid in the safe state space, $N_i = 350$, and
$o_\mathit{gpc}=4$. We additionally experiment with alternate values for $G$ and
$N_i$, as they directly affect perception model training data generation time.
We set the number of time steps $T = 100$. To keep MCS runtime within 24 hours,
We set $N_s = 1,000$ for MCS. For \NAME{}, we increase $N_s$ to 10,000 as
$M_\mathit{GPC}$ is much faster than $M_V$ and increasing the number of samples
decreases sampling error for $M_\mathit{GPC}$.

For calculating sensitivity indices, we calculate sensitivity using both the
analytical and empirical method described in Section~\ref{sec:approach}. It is
not possible to compare sensitivity indices against $M_V$, as $M_V$ has a
different set of inputs (environment specification instead of a sample from
$\mathcal N(0, 1)$). Therefore, we use sensitivity index calculation using
$M'_V$ as the baseline. For the empirical method, we set $N_s = 10^6$, but also
monitor the results obtained by setting $N_s$ to $10^4$, $10^5$, and $10^7$. We
calculate the sensitivity of state variables to those in the previous time step,
as well as the sensitivity of the change in the state variables.

\paragraph{Environmental factors.} For the Crop-Monitor benchmark, our test
scenario includes two types of crops (corn and tobacco). There are four
different corn growth stages. For each type and growth stage, there are multiple
crop models (30 in total). For the Car benchmarks, our scenario includes the
presence of other cars, pedestrians, and skid marks that may obstruct lane
markings. We also vary lighting conditions.

\paragraph{Distribution similarity metrics.} We use two complementary similarity
metrics to separately compare each dimension of the MCS and \NAME{} state
distributions at each time step. The conservative \emph{KS statistic} quantifies
the maximum distance between the cumulative distribution functions of the two
distributions at any point. The \emph{Wasserstein metric} quantifies the minimum
probability mass that must be moved to transform one distribution into the
other. For both metrics, a lower value indicates greater distribution
similarity. We can use these distribution similarity metrics despite using more
samples for \NAME{} than for MCS.

We also compare the fraction of simulated vehicles remaining in the safe region
till each time step using three metrics. The two sample t-test is a statistical
test to check if the underlying distributions used to draw two sets of samples
are the same. The $\ell_2$ error is the RMS of the differences in safe state
probability at each time step. Lastly, we calculate the Pearson
cross-correlation coefficient between the two sets of safe state probabilities.
When plotting safe state probability, we also draw the 95\% bootstrap confidence
interval. This confidence interval does not directly compare the two plots, but
rather, for each individual plot, it provides an estimate of the variation that
can occur in that plot as a result of sampling error.
\section{Evaluation}
\label{sec:eval}

\begin{table}
      \centering
      \small
      \caption{Metrics for comparing state variable distributions}
      \label{tab:eval:kswass}
      \begin{tabular}{@{\hskip 0em}llr@{\hskip 0em}lr@{\hskip 0em}lrr@{\hskip 0em}}

      \textbf{Benchmark} & \textbf{Variable}
      & $\mathbf{\mu_{\NAME{}}}$ & $\mathbf{/\mu_{MCS}}$ & $\mathbf{\sigma_{\NAME{}}}$ & $\mathbf{/\sigma_{MCS}}$
      & $\mathbf{KS_{max}}$ & $\mathbf{Wass_{max}}$ \\

      \midrule

      \multirow{2}{*}{Crop-Monitor} & Heading (rad) & -0.004  & /-0.008 & 0.04 & /0.04 & 0.11 & 0.02 \\
                                    & Distance (m)  & -0.01   & /-0.02  & 0.03 & /0.03 & 0.14 & 0.01 \\

      \midrule

      \multirow{2}{*}{Car-Straight} & Heading (rad) & 0.0004 & /-0.0001 & 0.02 & /0.02 & 0.13 & 0.009 \\
                                    & Distance (m)  & 0.16   & /0.08    & 0.09 & /0.12 & 0.41 & 0.08 \\

      \midrule

      \multirow{2}{*}{Car-Curved} & Heading (rad) & -0.003 & /-0.005 & 0.006 & /0.007 & 0.17 & 0.003 \\
                                  & Distance (m)  & 0.18   & /0.19   & 0.04  & /0.05  & 0.15 & 0.02  \\

      \midrule

      \multirow{3}{*}{ACAS-Table} & Crossrange (km) & -0.07 & /-0.03 & 0.85 & /0.88 & 0.05 & 0.04 \\
                                  & Downrange (km)  & -0.61 & /-0.53 & 0.23 & /0.22 & 0.13 & 0.08 \\
                                  & Heading (rad)   & 0.63  & /-0.54 & 2.82 & /2.80 & 0.23 & 1.23 \\

      \midrule

      \multirow{3}{*}{ACAS-NN} & Crossrange (km) & -0.01 & /-0.01 & 0.93 & /0.91 & 0.02 & 0.03 \\
                               & Downrange (km)  & -0.45 & /-0.58 & 0.17 & /0.11 & 0.31 & 0.13 \\
                               & Heading (rad)   & 0.42  & /0.15  & 2.70 & /2.75 & 0.06 & 0.27 \\

      \end{tabular}
\end{table}

\begin{figure}
      \centering
      \includegraphics[width=.2\textwidth]{images/eval-safe/agbot.png}\hfill
      \includegraphics[width=.2\textwidth]{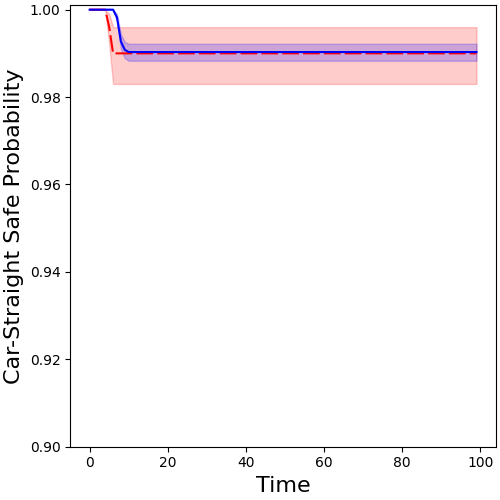}\hfill
      \includegraphics[width=.2\textwidth]{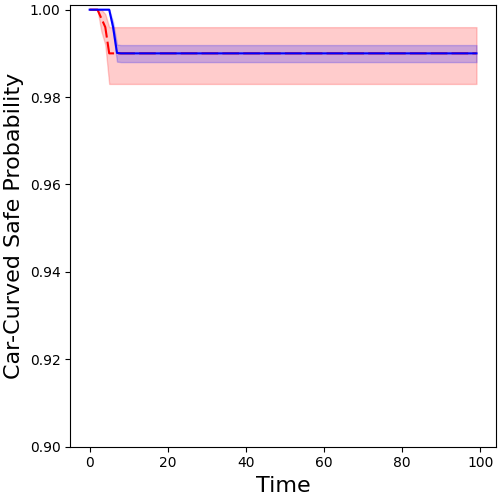}\hfill
      \includegraphics[width=.2\textwidth]{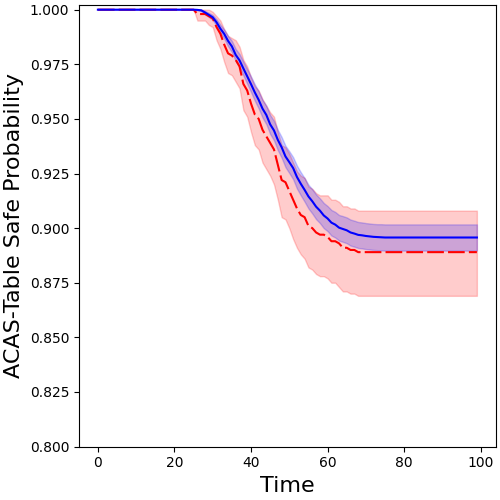}\hfill
      \includegraphics[width=.2\textwidth]{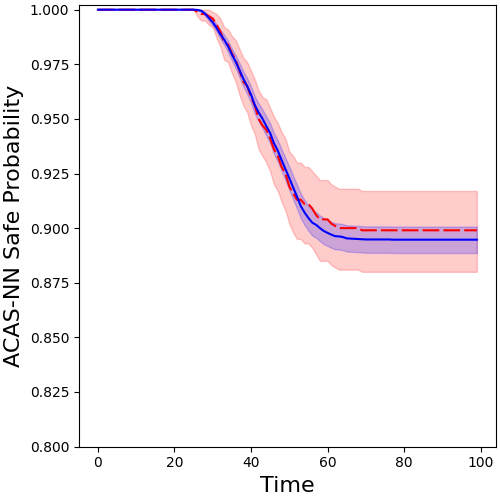}
      \caption{Evolution of safe state probability over time. Blue solid: \NAME{}, red dashed: MCS.}
      \label{fig:eval:safeProb}
\end{figure}

\subsection{Does \NAME{} accurately estimate the probability that a vehicle will
remain in a safe state over time?}
\label{sec:eval:stateest}

Table~\ref{tab:eval:kswass} compares the distributions calculated by \NAME{} and
MCS for each benchmark state variable. Columns~3-4 compare the mean and standard
deviation of the distributions at the \emph{final} time step. Columns~5-6 show
the maximum values of the KS statistic and Wasserstein metric over \emph{all}
time steps. For most state variables, the mean and standard deviation of the
distributions match closely up to the final time step. This is also indicated by
the low values of the Wasserstein metric and the conservative KS statistic. The
largest difference is for the Car-Straight benchmark distance distribution. This
occurs because the \NAME{} model and the original vehicle model converge towards
slightly different states around the center of the safe state space in later
time steps. However, during the initial time steps where more simulated vehicles
are in danger of entering unsafe states, the KS statistic does not exceed 0.15.
A similar phenomenon affects the ACAS-NN downrange distance variable. For
ACAS-Table, the \NAME{} and original vehicle models occasionally turn in
different directions to avoid an intruder approaching head-on, in situations
where turning in either direction is equally beneficial. This leads to a large
deviation in the heading variable.

Figure~\ref{fig:eval:safeProb} shows how the probability that the vehicle
remains in a safe state evolves over time. The blue solid and red dashed plots
show the probability estimates obtained using \NAME{} and MCS, respectively. The
shaded region around each plot shows the 95\% bootstrap confidence interval.
Because we use $10\times$ more samples when estimating safe state probability
with \NAME{} as compared to MCS, the sampling error is smaller for \NAME{},
which leads to a smaller confidence interval. Since \NAME{} approximates the
behavior of the vehicle model, as opposed to under/over approximation of
reachable states, \NAME{}'s safe state probability estimate can be on either
side of the MCS estimate.

Table~\ref{tab:eval:ssprob} shows the metrics we use to measure the similarity
of the safe state probabilities from Figure~\ref{fig:eval:safeProb}. Column~2
shows the number of time steps for which the t-test ``passed'', meaning that we
could not reject the null hypothesis that the probabilities are equal. Column~3
shows the $\ell_2$ error, and Column~4 shows the cross-correlation. The
similarity of the state distributions leads directly to the similarity of the
safe state probability for most time steps.

We extended the Crop-Monitor and Car benchmark experiments to 500 time steps to
confirm that the safe state probability does not deviate after 100 time steps.
We did not similarly extend the ACAS experiments as the ACAS system is primarily
relevant as the aircraft approach each other.

\textit{In conclusion, the vehicle state distributions that we estimate using
\NAME{} closely resemble those that we estimate using MCS, even after 100 time
steps. Consequently, the vehicle safe state probability estimate is also similar
between the two approaches.}

\begin{table}
\centering
\small
\begin{minipage}[t]{0.44\textwidth}
  \centering
  \caption{Metrics for comparing the probability of remaining in a safe state}
  \label{tab:eval:ssprob}
  \begin{tabular}{@{\hskip 0em}lrlr@{\hskip 0em}}
  \textbf{Benchmark} & \textbf{t-test} & \textbf{$\ell_2$ err} & \textbf{X-Cor} \\
  \midrule
  Crop-Monitor &  99/100 & 0.004 & 0.974 \\
  Car-Straight &  97/100 & 0.001 & 0.862 \\
  Car-Curved   &  98/100 & 0.001 & 0.865 \\
  ACAS-Table   & 100/100 & 0.007 & 0.998 \\
  ACAS-NN      & 100/100 & 0.003 & 0.999 \\
  \end{tabular}
\end{minipage}\hfill
\begin{minipage}[t]{.41\textwidth}
  \centering
  \caption{Maximum difference in sensitivity indices}
  \label{tab:sens}
  \begin{tabular}{@{\hskip 0em}lll@{\hskip 0em}}
  \textbf{Benchmark}
  & $\mathbf{x_0\!\rightarrow\!y_1}$        & $\mathbf{x_0\!\rightarrow\!dy_0}$ \\
  \midrule
  Crop-Monitor & 0.00003 & 0.0004 \\
  Car-Straight & 0.0002  & 0.006  \\
  Car-Curved   & 0.00003 & 0.003  \\
  ACAS-Table   & 0.00001 & 0.009  \\
  ACAS-NN      & 0.00002 & 0.061  \\
  \end{tabular}
\end{minipage}
\end{table}

\subsection{Does \NAME{} accurately estimate global sensitivity indices of the
vehicle model?}
\label{sec:eval:sens}

Table~\ref{tab:sens} presents the maximum difference between sensitivity indices
calculated using $M_\mathit{GPC}$ and those calculated using $M'_V$. Column~2 is
for the sensitivity of state variables in time step 1 to those in time step 0
($x_0\!\rightarrow\!y_1$) and Column~3 is for the sensitivity of the change in
the state variables ($x_0\!\rightarrow\!dy_0$ where $dy_0\!=\!y_1\!-\!y_0$). The
sensitivity indices calculated by $M_\mathit{GPC}$ and $M'_V$ match closely.

Figure~\ref{fig:apdx:agbotsens} shows an example visual comparison of
sensitivity indices calculated by \NAME{} and MCS for the sensitivity of the
change in state between time step 0 and 1 to the initial state in time step 0
for Crop-Monitor. The input variables include the heading and distance at time
step 0, and the two components of the raw sample that is transformed into the
perception neural network output distribution sample. The output variables are
the heading and distance at time step 1. There is one sensitivity index
corresponding to each input/output variable pair. The blue solid line shows the
sensitivity calculated analytically by \NAME{}, the blue dotted line shows the
sensitivity calculated empirically using the \NAME{} model, and the red dashed
line shows the sensitivity calculated empirically using $M'_V$. In each subplot,
the X-Axis shows the number of samples used for empirical estimation, while the
Y-Axis shows the calculated sensitivity index. As the number of estimation
samples is increased, the sensitivity indices calculated via estimation converge
towards those calculated analytically. About $10^6$ samples are needed for
convergence.

\begin{figure}
  \centering
  \includegraphics[width=0.9\columnwidth]{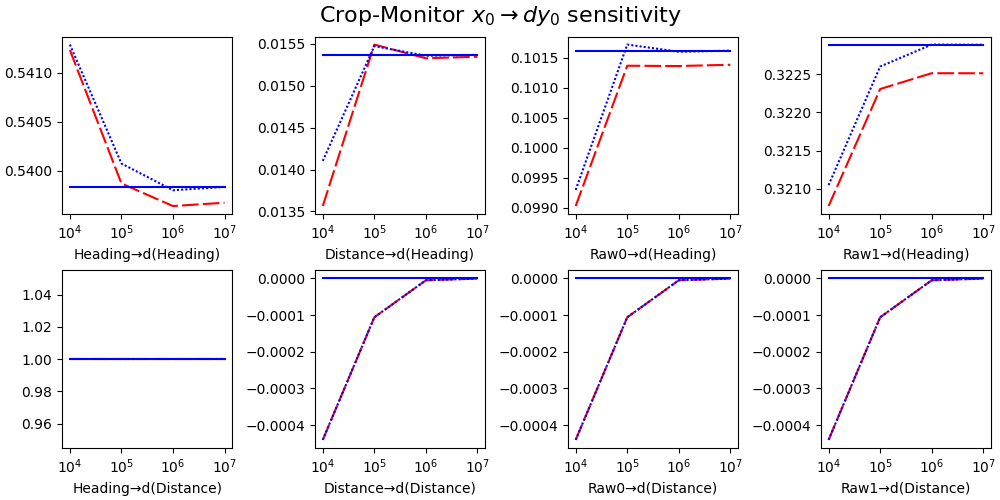}\!
  \caption{Calculated sensitivity index comparison. Blue solid: \NAME{} (analytical method), blue dotted: \NAME{} (empirical method), red dashed: MCS.}
  \label{fig:apdx:agbotsens}
\end{figure}

\textit{In conclusion, \NAME{} enables precise global sensitivity analysis of
vehicle models.}

\subsection{How do the parameters of the \NAME{} perception model affect its
accuracy?}
\label{sec:eval:permodab}

Table~\ref{tab:permodab} describes the effects of changing the perception model
parameters $G$ and $N_i$. Column~1 presents the value of the parameter.
Columns~2-4 present the maximum KS statistic and Wasserstein metric ($\times
100$) for the Crop-Monitor, Car-Straight, and Car-Curved benchmarks,
respectively. Column~5 presents the estimated speedup caused by changing the
parameter value, calculated based on the reduction in the number of images that
must be captured and processed. We exclude the ACAS benchmarks as they do not
use a perception model.

We focus on the cases where the error metrics change by 10\% or more. While the
grid size $G$ can be reduced to $9 \times 9$ without much loss of accuracy,
further reducing it to $7 \times 7$ increases the error for all benchmarks.
Similarly, reducing the number of images captured at each grid point ($N_i$) to
225 does not cause much loss of accuracy, but further reducing it to 100 images
increases the error for all benchmarks. We prefer to err on the side of caution
by collecting 350 images over a $11 \times 11$ grid. The minimal change in
accuracy caused by increasing $G$ from $9 \times 9$ to $11 \times 11$ or
increasing $N_i$ from 225 to 350 also shows that further increases are unlikely
to improve the accuracy of \NAME{}.

\textit{In conclusion, perception model parameters can have a significant
effect on the end-to-end results of \NAME{}, and we must choose them carefully
for optimal accuracy.}

\begin{table}
  \centering
  \small
  \caption{Effect of changing perception model parameters on accuracy. An
  asterisk (*) indicates the primary value used in our evaluation. Changes of
  10\% or more are highlighted.}
  \label{tab:permodab}
  \definecolor{BetterColor}{rgb}{0,.5,0}
  \definecolor{WorseColor}{rgb}{.5,0,0}
  \newcommand{\B}[1]{\textcolor{BetterColor}{\textbf{#1}}}
  \newcommand{\W}[1]{\textcolor{WorseColor}{\textbf{#1}}}
  \begin{tabular}{@{\hskip 0em}lr@{\hskip .2em}c@{\hskip .2em}rr@{\hskip .2em}c@{\hskip .2em}rr@{\hskip .2em}c@{\hskip .2em}rr@{\hskip 0em}}
    \textbf{Parameter} & \multicolumn{9}{c}{$\mathbf{KS_{max}\!\times\!10^2/Wass_{max}\!\times\!10^2}$} & \textbf{Relative} \\
    \textbf{Value} & \multicolumn{3}{c}{\textbf{C-Mon}} & \multicolumn{3}{c}{\textbf{C-Str}}
    & \multicolumn{3}{c}{\textbf{C-Cur}} & \textbf{Speedup} \\
    \midrule
    \multicolumn{11}{l}{Ground truth grid dimensions ($G$)} \\
    \midrule
    $7 \times 7$    & \W{20.3} &/&    2.43  & \W{46.8} &/& \W{9.63} & \W{23.8} &/& \W{3.24} & $2.5\times$ \\
    $9 \times 9$    & \B{12.2} &/&    2.25  &    42.4  &/&    8.39  &    16.2  &/& \B{2.00} & $1.5\times$ \\
    $11 \times 11$* &    13.8  &/&    2.37  &    41.1  &/&    7.97  &    17.0  &/&    2.23  & $1.0\times$ \\
    \midrule
    \multicolumn{11}{l}{Images captured per grid point ($N_i$)} \\
    \midrule
    100             & \W{33.1} &/&    2.34  & \W{49.6} &/& \W{9.77} & \W{18.7} &/& \W{2.59} & $3.5\times$ \\
    225             & \W{16.1} &/&    2.35  &    40.5  &/&    7.91  &    17.0  &/&    2.43  & $1.6\times$ \\
    350*            &    13.8  &/&    2.37  &    41.1  &/&    7.97  &    17.0  &/&    2.23  & $1.0\times$ \\
  \end{tabular}
\end{table}

\subsection{Is \NAME{} \emph{faster} compared to Monte Carlo Simulation?}
\label{sec:eval:speedup}

\begin{table}
  \centering
  \begin{minipage}[t]{0.47\textwidth}
    \centering
    \small
    \caption{Time usage for constructing $M_\mathit{GPC}$}
    \label{tab:eval:constime}
    \begin{tabular}{@{\hskip 0em}lrrr@{\hskip 0em}}
    \textbf{Benchmark} & $\mathbf{t_{dat}}$ & $\mathbf{t_{per/anc}}$ & $\mathbf{t_{GPC}}$ \\
    \midrule
    Crop-Monitor & 8.5h & 1.1s & 1.4s \\
    Car-Straight & 3.3h & 1.1s & 1.4s \\
    Car-Curved   & 3.1h & 1.1s & 1.4s \\
    ACAS-Table   &  N/A & 0.3s & 0.3s \\
    ACAS-NN      &  N/A & 0.3s & 0.4s \\
    \end{tabular}
  \end{minipage}\hfill
  \begin{minipage}[t]{0.48\textwidth}
    \centering
    \small
    \caption{Time usage for state distribution estimation}
    \label{tab:eval:disttime}
    \begin{tabular}{@{\hskip 0em}lrr@{\hskip .2em}l@{\hskip 0em}}
    \textbf{Benchmark} & $\mathbf{t_{MCS}}$ & \multicolumn{2}{l}{$\mathbf{t_{\NAME{}}}$} \\
    \midrule
    Crop-Monitor & 19.5h & 8.5h & ($2.3\times$) \\
    Car-Straight &  6.8h & 3.3h & ($2.1\times$) \\
    Car-Curved   &  6.5h & 3.1h & ($2.1\times$) \\
    ACAS-Table   &  8.0s & 1.1s & ($7.3\times$) \\
    ACAS-NN      & 10.7s & 1.2s & ($8.9\times$) \\
    \end{tabular}
  \end{minipage}
\end{table}

\begin{table}
  \centering
  \small
  \caption{Time usage for sensitivity analysis (excluding $t_\mathit{dat}$ and $t_\mathit{per}$ from Table~\ref{tab:eval:constime})}
  \label{tab:eval:senstime}
  \begin{tabular}{@{\hskip 0em}lrr@{\hskip .2em}lr@{\hskip .2em}l@{\hskip 0em}}
  \textbf{Benchmark} & $\mathbf{t^{emp}_{MCS}}$ & \multicolumn{2}{l}{$\mathbf{t^{emp}_{\NAME{}}}$} & \multicolumn{2}{l}{$\mathbf{t^{ana}_{\NAME{}}}$} \\
  \midrule
  Crop-Monitor & 9.5s & 6.2s & ($1.3\times$) & 11.4s & ($0.7\times$) \\
  Car-Straight & 9.7s & 6.2s & ($1.3\times$) & 11.4s & ($0.8\times$) \\
  Car-Curved   & 9.8s & 6.2s & ($1.3\times$) & 11.3s & ($0.8\times$) \\
  ACAS-Table   & 5.2s & 5.6s & ($0.9\times$) &  3.3s & ($1.6\times$) \\
  ACAS-NN      & 4.8s & 5.3s & ($0.9\times$) &  3.1s & ($1.5\times$) \\
  \end{tabular}
\end{table}

\paragraph{\NAME{} model construction.} Table~\ref{tab:eval:constime} shows the
time required to construct the \NAME{} model. Column~1 shows the benchmark,
Column~2 shows the time required to gather training data for the perception
model as necessary, Column~3 shows the time required to create the perception
and/or ancillary model as necessary, and Column~4 shows the time required to
create $M_\mathit{GPC}$. While creating the perception, ancillary, and GPC
models takes a few seconds, gathering the training data for the perception model
takes several hours. As shown in Section~\ref{sec:eval:permodab}, reducing
perception model parameters such as $|G|$ and $N_i$ can reduce this time, but
can also lead to a reduction in accuracy.

\paragraph{State distribution estimation.} Table~\ref{tab:eval:disttime} shows
the time required by \NAME{} and MCS for state distribution estimation. Column~1
shows the benchmark. Column~2 shows the time required for using $M_V$ for state
distribution estimation. Column~3 shows the total time required by \NAME{} --
this includes the \emph{total} time required to construct $M_\mathit{GPC}$ (from
Table~\ref{tab:eval:constime}) and then evaluate it for state distribution
estimation via Algorithm~\ref{code:app:gpcUseApprox}. Column~3 also shows the
speedup of \NAME{} over MCS. The costly process of gathering and processing
images contributes to over 99\% of $t_\mathit{MCS}$ for the Crop-Monitor and Car
benchmarks. While increasing $N_s$ or $T$ increases $t_\mathit{MCS}$
significantly, the corresponding increase in $t_\mathit{\NAME{}}$ is negligible
as the time required to create the perception model is independent of $N_s$ and
$T$. Consequently, the speedup of \NAME{} for state distribution estimation
increases for longer experiments or higher number of samples. For the ACAS
benchmarks, the control component of $M_V$ contributes to over 90\% of
$t_\mathit{MCS}$. $M_\mathit{GPC}$ is faster than even the dynamics component of
$M_V$, leading to significant speedups for these benchmarks.

\paragraph{Sensitivity analysis.} Table~\ref{tab:eval:senstime} shows the time
required by \NAME{} and MCS for sensitivity analysis. Column~1 shows the
benchmark. Columns~2-3 show the required for calculating \emph{all} sensitivity
indices empirically with $M'_V$ and $M_\mathit{GPC}$ respectively. Column~4
shows the time required calculating all sensitivity indices analytically with
$M_\mathit{GPC}$. Columns~3-4 also show the speedup of \NAME{} for the two
approaches using $M_\mathit{GPC}$. Since both $M'_V$ and $M_\mathit{GPC}$ use
the perception model for the first three benchmarks, we exclude the time
required to train and construct the perception model for those benchmarks. The
replacement of the perception system by the perception model significantly
speeds up $M'_V$ as compared to $M_V$ and also enables vectorization. As this
optimized version is the baseline for sensitivity indices calculation, the
speedup of \NAME{} for this application is lower than that for distribution
estimation. While the analytical and empirical methods for calculating
sensitivity using $M_\mathit{GPC}$ have similar accuracy, different methods are
faster for different benchmarks.

\paragraph{Rapid iteration.} For the first three benchmarks, if the perception
system is altered, \NAME{} must create a new perception and GPC model.
Consequently, while the speedup of \NAME{} stays the same, the total amount of
time saved over MCS increases with each iterative change. If only the vehicle
control or dynamics are altered, then \NAME{} does not need to create a new
perception model. Because gathering data for the perception model is the major
contributor to the runtime of \NAME{} as shown in
Tables~\ref{tab:eval:constime}~and~\ref{tab:eval:disttime}, this allows vehicle
developers to rapidly make changes to the control and dynamics systems of the
vehicle and reanalyze the system with these changes within seconds.

\textit{In conclusion, \NAME{} provides geometric mean speedups of $3.7\times$
for state distribution estimation} and $1.4\times$ for sensitivity analysis.
Increasing the number of samples or time steps or using \NAME{} for rapid
iteration during development further increases this speedup.
\section{Discussion}
\label{sec:disc}

\subsection{Scalability}
\label{sec:disc:scale}

The GPC polynomial model must consider all possible interactions between state
variables, and can suffer from the ``curse of dimensionality'' (combinatorial
explosion of the number of polynomial terms). This is a general limitation of
GPC, for which researchers in engineering applications have proposed solutions
such as low-rank approximation~\cite{konakli2015uncertainty}.

Another solution is to omit higher-order polynomial terms in which multiple
state variables interact~\cite{chaospyTrunc}. Figure~\ref{fig:disc:trunceffects}
shows the effects of this polynomial truncation approach for an order 4 GPC
model. For both plots, the X-Axis shows the number of dimensions in the state
space. The Y-Axis of the left plot shows the number of polynomial terms in the
constructed GPC model, and the Y-Axis of the right plot shows the model
construction time. The dashed red lines and the solid blue lines show the
results for the full polynomial and the truncated polynomial, respectively. The
speedup from truncation increases rapidly with dimensionality. Critically, this
method retains lower-order interaction terms, ensuring that the constructed GPC
model can still account for interactions between state variables (albeit to a
lesser degree of fidelity).

\begin{figure}
\centering
\includegraphics[width=0.5\columnwidth]{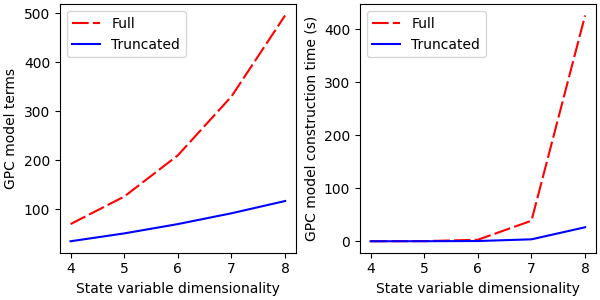}
\caption{Effect of truncation on GPC model terms and construction time for an order 4 GPC model.}
\label{fig:disc:trunceffects}
\end{figure}

\subsection{Limitations}
\label{sec:disc:limit}

\NAME{} samples the provided environment distribution $\mathcal{D}_E$ when
generating training data for the perception model in
Algorithm~\ref{code:app:perTrain}. If \NAME{} uses the perception model created
for $\mathcal{D}_E$ to create a GPC model for a scenario with a significantly
different environment distribution $\mathcal{D}_E'$, it can lead to a reduction
in accuracy of the final GPC model. To prevent this issue, we must carefully
choose $\mathcal{D}_E$ to represent the distribution of environments the vehicle
is expected to operate in. Even if creating a new perception model is necessary,
we may be able to partially reuse existing training data from $\mathcal{D}_E$
that also fits into $\mathcal{D}_E'$, reducing the additional time required to
create the new perception model.

To construct the GPC model in Algorithm~\ref{code:app:gpcCons}, all component
distributions of the multivariate distribution $J$ must have corresponding
orthogonal polynomials. This is true for a wide variety of distributions such as
the uniform, normal, beta, and gamma distributions. We must ensure that all
input variables for the GPC model are independent and are instances of these
distributions. The wide variety of supported distributions is usually sufficient
for modeling relevant input variables.

As GPC produces a polynomial model, its accuracy is limited when modeling
functions with discontinuities or limited
differentiability~\cite[Theorem~3.6]{XiuGPCBook}. Practically, this means that
the GPC model will have a systemic bias that can be reduced, but not eliminated.
Our evaluation shows that \NAME{}'s hyperparameters must be carefully chosen to
maximize accuracy while still providing speedups over MCS. However, once the
ideal hyperparameters are found for modeling a particular vehicle system, it is
possible to reuse them when making iterative changes to the vehicle model.

We demonstrate the applicability of \NAME{} to autonomous vehicle scenarios with
various structures and associated challenges. However, the applicability and
efficiency of \NAME{} for arbitrary autonomous vehicle scenarios is an open
question and an interesting topic for future investigation.

\subsection{Alternative formulations}
\label{sec:disc:alts}

The GPC model construction process in Algorithm~\ref{code:app:gpcCons} evaluates
the abstracted vehicle model at specific quadrature nodes. Instead of
constructing a perception model as in Section~\ref{app:permod}, we attempted to
directly calculate and use the actual perception neural network output
distributions at these quadrature nodes. However, if the quadrature nodes change
(e.g., by changing the state distribution or order of GPC), then new images must
be captured and processed for the new quadrature nodes. In contrast, the
perception model can be directly reused as it is agnostic to the quadrature
nodes.

We also experimented with replacing only parts of $M'_V$ with GPC as follows:
first, we replaced the perception system with the perception model to create
$M'_V$ as in Section~\ref{app:permod}. Then, instead of replacing all of $M'_V$
with a GPC model as in Section~\ref{sec:app:gpccons}, we replaced only the
vehicle control and dynamics systems (as the perception system had already been
replaced by a polynomial model $M_\mathit{per}$). While the accuracy of this
approach was the same as our main approach, the partially replaced model was
about $3\times$ slower than the fully replaced model during evaluation for all
benchmarks.
\section{Related Work}
\label{sec:relwork}

\paragraph{Analysis and verification of vehicle systems.}
DryVR~\cite{DryVR} is a system for verifying the safety of vehicle models that
consist of a whitebox mode-switching control system composed with a blackbox
dynamics system. DryVR obtains multiple samples from the blackbox dynamics and
uses it give an over-approximate guarantee on the reachable set of states. In
comparison, \NAME{} focuses on perception and control systems which include
blackbox components such as neural networks.

Jha et al.~\cite{autonomyCCTL} and DeepDECS~\cite{calinescu2022discreteevent}
analyze perception system uncertainty to create a correct by synthesis
controller that satisfies a temporal logic safety constraint. \NAME{} allows us
to estimate the safety of existing, well known controllers for broad
applications. Althoff et al.~\cite{6784493} focus on online verification of
vehicle safety properties to adapt to unique traffic situations, assuming an
upper bound on the noise of the perception system. In contrast, \NAME{} allows
the programmer to directly specify the perception system, and handles the
dynamic nature of perception error through the perception model.

Musau et al.~\cite{9763623} complement complex neural network controllers with a
safety controller which takes over if a runtime reachability analysis detects a
potential collision. \NAME{}'s state distribution estimates can be used to
determine the degree of reliability of the neural network controller and
therefore the optimal criteria for switching to the safety controller.

Yang et al.~\cite{9797620} propose a runtime system for detecting environmental
conditions that were not part of the training data for a pre-trained perception
system. Cheng et al.~\cite{10.1007/978-3-030-01090-4_8} ensure that the training
data covers different combinations of environmental factors, based on their
relative importance. Scenic~\cite{Scenic} is a probabilistic language for
specifying scenes in a virtual world for generating training images for
perception neural networks in various environments. These works can be used when
sampling images and/or prioritizing environments for training \NAME{}'s
perception model in order to ensure that \NAME{}'s results are valid for the
entire range of expected environments. For our evaluation, we used simple but
realistic distributions of environmental parameters when generating images for
training the perception model.

\paragraph{System simulation and modeling.} Kewlani et al.~\cite{GPCforDynamics}
create and use GPC surrogate models of vehicle \emph{dynamics} components. Lin
et al.~\cite{mlsurrogate} do the same using small neural networks as surrogate
models. Replacing only vehicle dynamics with GPC surrogates leads to negligible
speedup as the perception and control components of the vehicle model
contributed to 90\% or more of the runtime of the MCS analysis of our
benchmarks. \NAME{} creates GPC surrogate models of \emph{complete} vehicle
systems -- the GPC model also replaces the expensive perception and control
components. ARIsTEO~\cite{menghi2020approximation} uses abstraction refinement
to create surrogate models of cyber-physical systems with low dimensional
inputs. \NAME{} handles high-dimensional image inputs by first creating a
perception model, and then creates a surrogate model of the
reduced-dimensionality abstract vehicle model.

Li et al.~\cite{arxiv.2211.12733} create small neural network surrogate models
to calculate a \emph{fitness function} for an autonomous driving system in
various traffic scenarios. \NAME{}'s surrogate models instead estimate the state
distribution of the vehicle over time -- information which can be used to
calculate various fitness metrics. Michelmore et al.~\cite{uqine2eadc} evaluate
the safety of end-to-end Bayesian Neural Network (BNN) controllers. While
\NAME's perception model focuses on replacing regression DNNs, vehicle models
with BNNs are an interesting target for extending \NAME{}.

\paragraph{Simplification of complex perception and control systems.} Cheng et
al.~\cite{regressionAltNN} explore the correspondence between neural networks
and polynomial regressions. Unlike their approach, \NAME{}'s perception model
only needs to predict the \emph{distribution} of neural network outputs, instead
of individual input-output relationships.

Several approaches focus on sound and complete verification of safety properties
of ACAS-Xu neural networks (e.g., that the network will not give a COC advisory
when an intruder aircraft is directly ahead on a collision course~\cite{NNV,
NNEnum}). Others focus on providing probabilistic guarantees (e.g., bounding the
probability that similar inputs will result in different
advisories~\cite{ConverseFGP20}, or the probability of violating the safety
property described above~\cite{tranquantverif}). Unlike \NAME{} and the
approaches described below, these approaches focus on checking safety properties
of the networks \emph{in isolation}, that is, they do not take into account how
the aircraft moves over time.

Several other recent approaches (e.g.,~\cite{horizontalCAS, BakNFM22,
LopezJAT22}) focus on deterministically verifying properties of a
\emph{closed-loop system} containing ACAS-Xu neural networks and aircraft
dynamics. In particular, Bak and Tran~\cite{BakNFM22} checked the safety of the
neural networks from~\cite{juliandnncomp} in 32 minutes on a 128 core machine.
In contrast to those approaches, \NAME{} does simulation-based probabilistic
testing of ACAS-Xu neural networks coupled with aircraft dynamics. While \NAME{}
can find unsafe scenarios, it cannot provide non-probabilistic safety
guarantees. Julian et al.~\cite{juliandnncomp} performed 1.5 million simulations
to test their ACAS-Xu neural network, and Owen and Kochenderfer~\cite{7777959}
performed 10 million simulations to determine whether horizontal or vertical
advisories are safer in different states. We view \NAME{} as a useful tool for
speeding up such simulations and sanity checks of the closed-loop aircraft
system when making experimental changes to it, before moving on to full
verification.

Ghosh et al.~\cite{arxiv.1911.01523} iteratively synthesize perception models
and controllers guided by counterexamples to temporal logic safety properties.
Hsieh et al.~\cite{chiaoPaper} create a perception model where the mean is
calculated using piecewise linear regression and the allowable variance is
calculated based on the controller code using program analysis tools like CBMC.
Unlike these works, \NAME{}'s perception model is independent of the controller,
and can be reused without requiring additional sampling when iterative changes
are made to the controller during development.

\section{Conclusion}
\label{sec:conc}

We present \NAME{}, the first approach for using Generalized Polynomial Chaos
(GPC) to analyze a complete autonomous vehicle system which composes complex
perception and/or control components with vehicle dynamics. \NAME{} first
creates a model of the perception subsystem and uses it to generate a surrogate
model of the entire vehicle system. \NAME{} accurately models the system while
being $3.7\times$ faster on average for state distribution estimation and
$1.4\times$ faster on average for sensitivity analysis than Monte Carlo
Simulation on five scenarios used in agricultural vehicles, self driving cars,
and unmanned aircraft. We anticipate that \NAME{} can augment or partially
replace simulation-based workflow in tasks that require iterative modifications
or continuous re-execution of the model -- for example, during vehicle
development.

\bibliographystyle{ACM-Reference-Format}
\bibliography{references}

\end{document}